\documentclass[12pt]{article}
\usepackage{geometry}

\usepackage{svg}
\usepackage{amsmath, amssymb, graphicx,fullpage,color,mathtools,amsthm,xcolor}
\usepackage{caption, subcaption}
\usepackage{algorithm, algorithmic}
\usepackage{cite}
\usepackage{subcaption}
\usepackage{array}
\usepackage{url}
\usepackage{tikz}
\usepackage{tabularx}
\usepackage{microtype}
\usepackage{hyperref,color}

\definecolor{webgreen}{rgb}{0,.35,0}
\definecolor{webbrown}{rgb}{.6,0,0}
\definecolor{RoyalBlue}{rgb}{0,0,0.9}
\definecolor{purp}{rgb}{0.6,0.05,0.8}
\definecolor{ora}{rgb}{0.7,0.35,0.02}

\newtheorem{theorem}{Theorem}[section]

\hypersetup{
   colorlinks=true, linktocpage=true, 
   urlcolor=webbrown, linkcolor=RoyalBlue, citecolor=webgreen,
   pdfauthor={Hugo Hiu Chak Cheng, Gary P. T. Choi},
   pdfsubject={Monotile kirigami}
}

\title{Monotile kirigami}
\author{Hugo Hiu Chak Cheng$^{1}$, Gary P. T. Choi$^{1,\ast}$\\
\\
\footnotesize{$^{1}$Department of Mathematics, The Chinese University of Hong Kong}\\
\footnotesize{$^\ast$To whom correspondence should be addressed; E-mail: ptchoi@cuhk.edu.hk}
}
\date{ }

\begin{document}

\maketitle

\begin{abstract}
Kirigami, the art of paper cutting, has been widely used in the modern design of mechanical metamaterials. In recent years, many kirigami-based metamaterials have been designed based on different planar tiling patterns and applied to different science and engineering problems. However, it is natural to ask whether one can create deployable kirigami structures based on the simplest forms of tilings, namely the monotile patterns. In this work, we answer this question by proving the existence of periodic and aperiodic monotile kirigami structures via explicit constructions. In particular, we present a comprehensive collection of periodic monotile kirigami structures covering all 17 wallpaper groups and aperiodic monotile kirigami structures covering various quasicrystal patterns as well as polykite tilings. We further perform theoretical and computational analyses of monotile kirigami patterns in terms of their shape and size changes under deployment. Altogether, our work paves a new way for the design and analysis of a wider range of shape-morphing metamaterials.

\end{abstract}

%%%%%%%%%%%%%%%%%%%%%%%%%%%%%%%%%%%%%%%%%

\section{Introduction}

Kirigami, the traditional paper-cutting art, has inspired the design of many modern mechanical metamaterials~\cite{zhai2021mechanical,jiao2023mechanical,jin2024engineering,choi2024computational,dudek2025shape}. In particular, by introducing cuts on a sheet of materials, one can obtain deployable structures with different desired shape-morphing effects. Over the past few decades, kirigami metamaterials have been utilized in a wide range of practical applications, including soft electronics~\cite{blees2015graphene}, robotics~\cite{rafsanjani2018kirigami}, and other multifunctional devices~\cite{wang2023physics, yang2023new}.

Note that many kirigami structures are designed based on classical tiling patterns~\cite{grunbaum1986tilings}. In particular, there has been a vast number of studies on kirigami structures consisting of periodic tilings of simple polygons such as triangles~\cite{grima2006auxetic}, squares~\cite{grima2000auxetic}, rectangles~\cite{grima2004negative}, and rhombi~\cite{attard2008auxetic}. In recent years, some other designs of deployable kirigami structures have been proposed based on ancient artistic patterns~\cite{rafsanjani2016bistable} and more general wallpaper group patterns~\cite{grima2011auxetic,stavric2019geometrical,liu2021wallpaper,liu2024auxetic} as well as quasicrystal tilings~\cite{liu2022quasicrystal}. Besides utilizing standard polygonal tiling patterns for kirigami design, many prior works have also focused on generalizing the kirigami cut geometry and cut topology, in which either the tile shapes or the tile connections are carefully modified or optimized, to achieve different desired physical properties. For instance, Choi et al.~\cite{choi2019programming} developed a method for designing kirigami tessellations that can be transformed into a prescribed two-dimensional or three-dimensional shape. Additional geometric constraints were later identified for the design of compact reconfigurable kirigami structures~\cite{choi2021compact}. More recently, an additive approach was developed for the geometric design of kirigami structures~\cite{dudte2023additive}. As for the topological design, Lubbers and van Hecke~\cite{lubbers2019excess} developed a method to identify and count excess floppy modes in kirigami metamaterials. In~\cite{an2020programmable}, An et~al. designed kirigami patterns with a hierarchical topological structure and demonstrated their effectiveness in programming stress-strain responses and triggering a large variety of deformation patterns. Besides, Chen et al.~\cite{chen2020deterministic} developed deterministic and stochastic methods to control rigidity and connectivity of kirigami structures via changing their topologies. Later, methods for controlling the explosive rigidity percolation of kirigami structures have been further developed based on a change in kirigami topology~\cite{choi2023explosive}.

In a recent groundbreaking work, Smith et al.~\cite{smith2024aperiodic} discovered aperiodic monotile patterns, which consist of a polykite shape called the ``Hat''. Moreover, it was shown that there is an infinite family of aperiodic monotiles generalized from the Hat pattern in the form of Tile$(a,b)$, where $a$ and $b$ are the relative lengths of two consecutive sides of the polykite tile. More recently, they further discovered shapes that tile aperiodically without reflections~\cite{smith2024chiral}. In particular, the monotile pattern Tile (1,1) admits periodic tiling but can tile aperiodically without the use of reflected tiles. Also, the ``Spectre'' tile, which is modified from Tile (1,1), can tile aperiodically using translation and rotation only, whereas it is impossible to tile the plane using reflected and original tiles together. Since the discovery of aperiodic monotiles, many follow-up studies have further explored their engineering applications for the design of isotropic zero Poisson's ratio metamaterials~\cite{clarke2023isotropic}, high-performance composites~\cite{jung2024aperiodicity}, and studied their elastic properties~\cite{naji2024effective} and other physical properties such as chirality and zero modes~\cite{schirmann2024physical}. 

Motivated by the above works, here we pose and solve the problem of designing deployable kirigami structures using \emph{periodic and aperiodic monotile patterns} (see Fig.~\ref{fig:F1}). In particular, we first establish the existence of periodic and aperiodic monotile kirigami patterns via explicit constructions, with different design strategies utilized for creating representative monotile kirigami patterns. We then further perform theoretical and computational analyses on the shape and size changes of different monotile kirigami patterns under the deployment process. Altogether, our systematic study of monotile kirigami provides a solid theoretical foundation for designing and utilizing kirigami patterns for practical applications involving shape-morphing structures.

The rest of the paper is organized as follows. In Section~\ref{sect:main}, we first establish the existence of periodic monotile kirigami patterns, followed by the exploration of aperiodic monotile kirigami patterns. In Section \ref{sect:analysis}, we perform theoretical analysis on the shape change properties of both periodic and aperiodic monotile kirigami patterns. In Section~\ref{sect:numerical}, we further perform a computational analysis of the monotile kirigami patterns and assess their size change properties using numerical simulations. We conclude our work and discuss possible future directions in Section~\ref{sect:conclusion}.

\begin{figure}[t]
    \centering
    \includegraphics[width=0.8\linewidth]{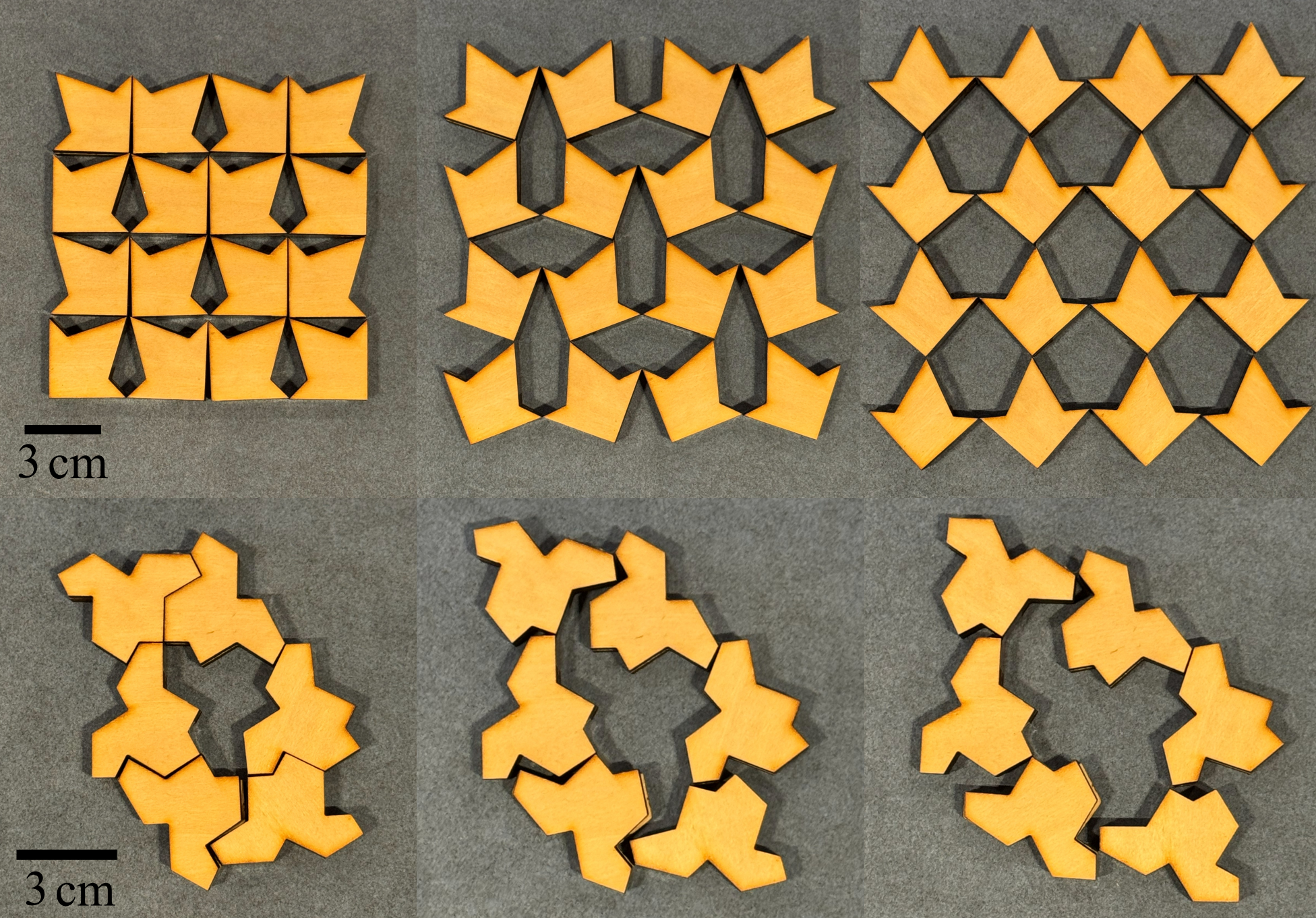}
    \caption{\textbf{Periodic and aperiodic monotile kirigami structures.} The top row shows the deployment of a physical model of a periodic monotile kirigami pattern based on the cm wallpaper group constructed in our work (see also Fig.~\ref{fig:monotile_proof}). The bottom row shows the deployment of a physical model of an aperiodic monotile kirigami pattern based on the ``Hat'' polykite tiles constructed in our work (see also Fig.~\ref{fig:polykite}). Both physical models consist of wood tiles produced by laser cutting and connected by tape joints.}
    \label{fig:F1}
\end{figure}

%%%%%%%%%%%%%%%%%%%%%%%%%%%%%%%%%%%%%%%%%%%%
\section{Existence of monotile kirigami patterns} \label{sect:main}

In this section, we establish the existence of deployable kirigami structures based on different monotile patterns. In particular, we will start with the case of periodic monotile patterns and create monotile kirigami patterns in all of the 17 wallpaper groups. We will then consider aperiodic monotile patterns based on quasicrystal tilings and polykite tilings and introduce our design strategies for achieving deployable kirigami structures.

\subsection{Periodic monotile kirigami}

As demonstrated in a prior work by Liu et~al.~\cite{liu2021wallpaper}, it is possible to design deployable kirigami patterns using any of the 17 wallpaper groups. It is natural to ask whether one can further design periodic \emph{monotile} kirigami patterns using any of the 17 wallpaper groups.

To be more precise, in this work, we define a \emph{periodic monotile kirigami pattern} as a pattern that satisfies all the following conditions:
\begin{itemize}
    \item The pattern consists of only one type of tile. All tiles in the pattern are identical up to translation and rotation, and they are connected using point connections. Holes in the kirigami pattern are allowed.

    \item The kirigami pattern can be repeated infinitely via suitable translation and rotation.

    \item The kirigami pattern falls into one of the 17 wallpaper groups. Note that here the wallpaper group identification is solely based on the geometry of the tiles but not the connections between them.

    \item The pattern is deployable in the sense that it can undergo size/shape changes while preserving the designed connectivity.

    % \item 
\end{itemize}

Below, we show that it is indeed possible to construct monotile kirigami patterns in any of the 17 wallpaper groups by establishing the following constructive proof.

\begin{theorem}
    For each of the 17 wallpaper group patterns, there exists a deployable monotile kirigami pattern with a configuration belonging to that wallpaper group. 
\end{theorem}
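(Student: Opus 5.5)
We prove the statement by explicit construction, handling the 17 wallpaper groups in two families. The guiding principle is that a kirigami pattern built from a single polygonal tile, with tiles joined at vertices, is deployable as soon as it has a nontrivial continuous one-parameter family of configurations preserving connectivity. The classical rotating-unit mechanism provides such a family: each tile rotates rigidly about its joints, neighbouring tiles rotate in opposite senses, and the holes open up. We therefore start from the known deployable rotating-square and rotating-triangle patterns, and from the wallpaper kirigami designs of Liu et~al.~\cite{liu2021wallpaper}. In each case we replace the tiles by one shape, chosen so that the symmetry group of the tile arrangement is exactly the target group. Because the wallpaper group is determined only by the tile geometry, not the connections, the task is to choose the tile shape and its placements appropriately.

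\textbf{Step 1: the base mechanisms.} We use three deployable networks:
\begin{itemize}
\item the quad network: rotating squares, with 4-fold centres at the square centres and at the holes;
\item the triangle network: rotating triangles, with 3-fold and 6-fold centres;
\item a parallelogram or rhombus network for the low-symmetry lattices.
\end{itemize}
In each network every tile is congruent. The deployment is a rotation by angle $\theta$ of alternating tiles, with $\theta=0$ giving the closed tiling. We check that for every $\theta$ in an open interval the connectivity is preserved and the tiles do not overlap. This follows from the standard rigid-rotation parametrisation and a local nonoverlap check at each joint.

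\textbf{Step 2: symmetry reduction by decorating the tile.} For each group $G$ we modify the base tile boundary by cutting or adding a small asymmetric notch or bump. This leaves the joint vertices fixed, so the mechanism is unchanged. We then place copies of the modified tile by the isometries of $G$ that preserve the base network:
\begin{itemize}
\item p4, p4m, p4g and the rectangular groups (pmm, pmg, pgg, cmm) come from the quad network;
\item p3, p3m1, p31m, p6 and p6m come from the triangle network;
\item p1, p2, pm, pg and cm come from the parallelogram network or from a sheared quad network.
\end{itemize}
Choosing the decoration's own symmetry (none, a mirror, or a rotation) and the relative orientations of neighbouring tiles sets the stabiliser structure, and so yields exactly $G$. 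We verify the group by identifying rotation centres of the appropriate orders and the presence or absence of mirrors and glide reflections, as in the standard identification flowchart. Since the statement requires tiles identical up to translation and rotation only, groups containing reflections must realise their mirrors through tiles that are themselves mirror-symmetric, or through arrangements of unreflected copies. For these groups we choose a mirror-symmetric decoration.

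\textbf{Step 3: deployability of the decorated pattern.} Rigid motion of each tile about the unchanged joints shows that every configuration in the family is admissible. Small decorations cause no new contacts for $\theta$ in some interval $(0,\theta_0)$, so the decorated pattern is deployable. We also check that the wallpaper group is the same in the closed or partially deployed configuration we exhibit. The theorem only needs one configuration in that group.

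\textbf{Main obstacle.} The hardest part is the groups with glide reflections but restricted rotations: pg, pgg, pmg and cm. There a single tile shape, used without reflected copies, must produce glides. The first approach is to make the tile mirror-symmetric, so that a glide of the arrangement maps each tile to a rotated or translated copy. We then need to confirm case by case that no extra symmetry, such as an unintended mirror or a higher-order rotation, is introduced. If needed, we break spurious symmetries by placing the decoration off-centre along the mirror axis.
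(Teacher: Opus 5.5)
\paragraph{A genuine gap: Step~2 is asserted, not carried out.} Your proposal never exhibits any of the seventeen patterns. The sentence claiming that choosing the decoration's symmetry and the relative orientations of neighbouring tiles ``yields exactly $G$'' is the whole content of the theorem. The paper's proof supplies exactly that content: one concrete pattern per group (Fig.~\ref{fig:monotile_proof}), with its symmetry read off.

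The recipe is also not automatic. Tiles may only be rotated, so every mirror or glide of the pattern carries a tile onto its mirror image. Hence in all twelve groups with an orientation-reversing element the tile itself must have a mirror; your alternative of realising mirrors ``through arrangements of unreflected copies'' does not exist. A tile centred at an $n$-fold centre of $G$ must also carry that rotation. Together these two constraints routinely push the symmetry above $G$:
\begin{itemize}
\item In the deployed rotating-squares network, the same-cell pgg $\subset$ p4g makes the square centres 2-fold centres. The tile then needs two perpendicular mirrors, the p4g mirrors survive, and the pattern is cmm, not pgg.
\item On the fully open kagome, a triangle centre that is a 3-fold centre of p3m1 has site symmetry $3m$. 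This forces a $D_3$ tile, which fits each slot in only one way and gives p6m.
\item If instead the 3-fold centres of p3m1 are the hexagon centres, with one tile per fundamental triangle, the tile's mirror lies on a median of that triangle and again generates p6m.
\end{itemize}
Your list of hard cases omits p3m1 entirely. That is the case for which the paper designs a new Y-shaped tile with a reduced rotating-squares connectivity.

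\paragraph{Repairs and smaller corrections.} Both examples can be fixed inside your framework, but only by specific choices that must be stated and checked. For p3m1, work on the \emph{twisted} kagome and take the subgroup generated by reflections in the sides of a triangle whose vertices are three mutually adjacent hexagon centres. This triangle contains exactly one tile. The twisted kagome's p31m is the union of this index-3 subgroup and its two cosets by the 3-fold rotation about the enclosed triangle. A tile with a single mirror and no 3-fold symmetry therefore leaves exactly p3m1. For pgg, use the subgroup of the deployed p4g with translations $2d\mathbb{Z}^2$ ($d$ the square spacing) and 2-fold centres at the hole centres. Every coset of this subgroup in p4g would require the tile to have 2-fold or 4-fold symmetry, so a square tile with a single mirror gives exactly pgg. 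A verification of this kind is needed for every one of the seventeen groups.

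Two further claims need correcting:
\begin{itemize}
\item Outward bumps overlap their neighbours for all small $\theta$. In the closed state adjacent tiles share whole edges, and the slits open only linearly from the hinges. Use inward notches only if you want deployability on $(0,\theta_0)$ and access to the closed state.
\item A generic parallelogram is chiral. A (sheared) parallelogram network therefore cannot produce pm, pg or cm unless the tile is a rhombus or a rectangle.
\end{itemize}
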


\begin{figure}[t!]
    \centering
    \includegraphics[width=\linewidth]{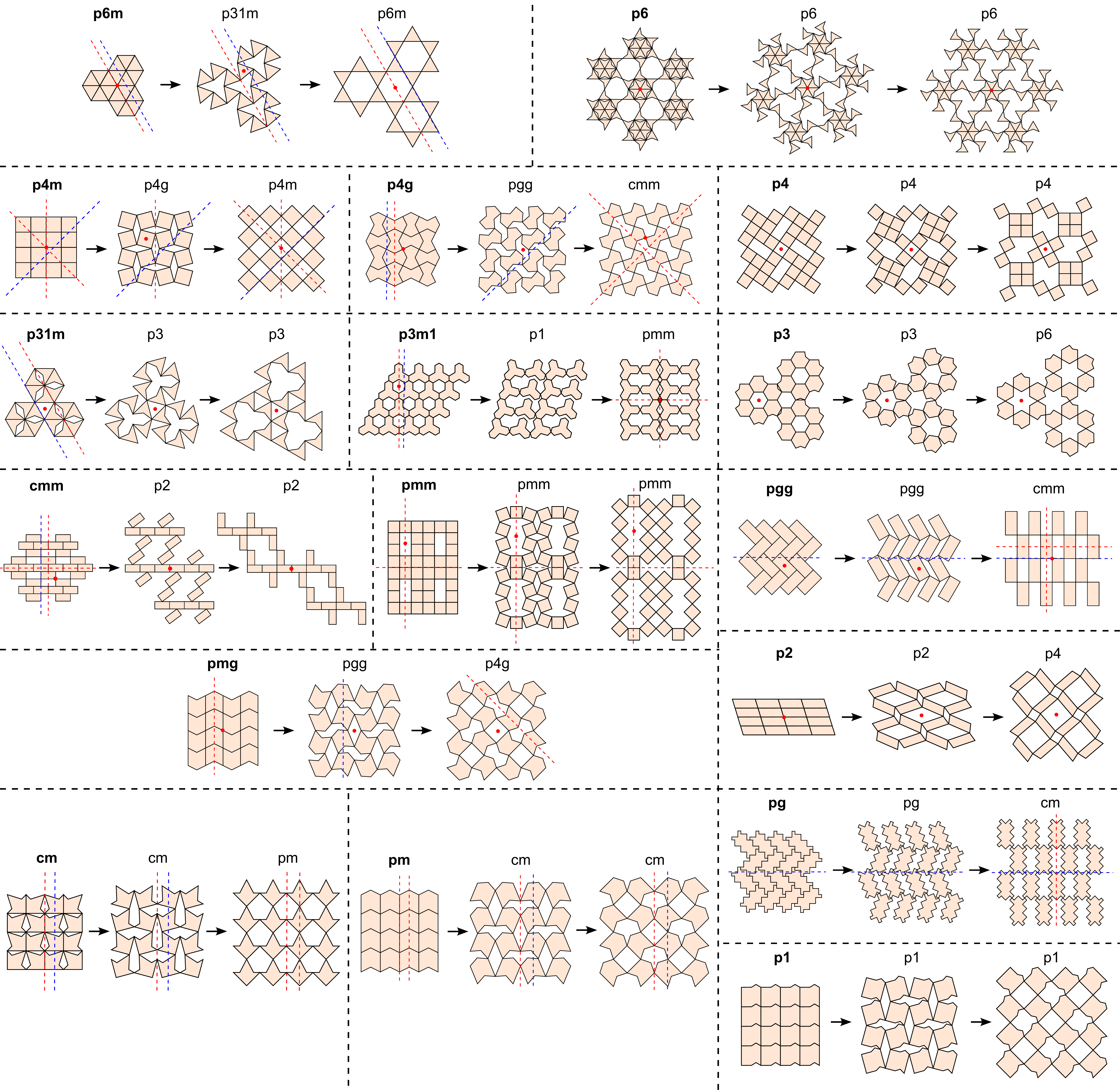}
    \caption{\textbf{Existence of deployable periodic monotile kirigami patterns for all 17 wallpaper groups.} Here, every red dot represents a rotation center, every red dashed line represents a reflectional axis, and every blue dashed line represents a glide reflectional axis. In all examples, the tile geometry and connectivity remain unchanged throughout the deployment.}
    \label{fig:monotile_proof}
\end{figure}

\begin{proof}
We prove the theorem by explicit construction. As shown in Fig.~\ref{fig:monotile_proof}, for each of the 17 wallpaper group patterns, we can indeed design a deployable monotile kirigami structure. Below, we describe the monotile patterns we identified from existing literature or newly obtained in this work based on some carefully designed construction strategies.

First, for the cases with 6-fold rotational symmetry, i.e., p6m and p6, we first note that the standard kagome pattern is already a monotile pattern in the p6m group (6-fold rotational symmetry together with reflectional symmetry), with the group change throughout the deployment process being p6m $\to$ p31m $\to$ p6m. As for the p6 case (without reflectional symmetry), here we design a 4-sided tile that looks like a triangle with a dent. By suitably rotating and connecting the tiles, we can break the reflectional symmetry while maintaining the 6-fold rotational symmetry and the deployability in forming a monotile kirigami structure. This structure remains to be p6 throughout the deployment process (p6 $\to$ p6 $\to$ p6).

We then consider the cases with 4-fold rotational symmetry (p4m, p4g, p4). For the p4m case (with mirrors at $45^\circ$), it is easy to see that the standard rotating squares pattern already satisfies the requirement (with the group change being p4m $\to$ p4g $\to$ p4m). As for the p4g case, here we design an 8-sided polygonal tile with a bow tie shape (similar to one of the patterns designed in~\cite{liu2021wallpaper}) and connect the tiles based on the standard rotating squares topology. We can then achieve a contracted state with reflectional symmetry but without mirrors at $45^\circ$, thereby getting a deployable monotile kirigami structure with group change p4g $\to$ pgg $\to$ cmm. Finally, for the p4 case (without reflectional symmetry), we achieve a deployable monotile kirigami pattern by carefully designing the connections between square tiles. In particular, each unit cell in the pattern consists of eight square tiles, where four of them are fully connected at the center and four of them are loosely connected at the four corners in a rotationally symmetric manner. This leads to a deployable monotile pattern staying in the p4 group throughout the deployment process (p4 $\to$ p4 $\to$ p4).

Next, we consider the three cases with 3-fold rotational symmetry (p31m, p3m1, p3). For the p31m case (with a rotation center off mirrors), we design a new monotile kirigami pattern using the 4-sided tile described above in the construction of the p6 pattern. This time, by suitably rotating the tiles and connecting them based on the standard kagome topology, we obtain a deployable structure with a p31m $\to$ p3 $\to$ p3 change throughout deployment, noting that the reflectional symmetry will be lost under deployment. For the p3m1 case, we design a new monotile pattern consisting of a Y-shaped tile. Note that connecting three such Y-shaped tiles in a way that forms a closed loop would make the pattern not deployable. Therefore, here we carefully design the connectivity between the tiles so that they loosely follow the rotating squares topology while having fewer connections. Under this design, we obtain a deployable structure that loses all its symmetry during the deployment but regains both rotational and reflectional symmetry (p3m1 $\to$ p1 $\to$ pmm). Lastly, for the p3 case, a monotile kirigami pattern can be constructed by using a tile that resembles a hexagon with a dent as designed in~\cite{liu2021wallpaper}, and the pattern remains in the p3 group under deployment (p3 $\to$ p3 $\to$ p3).

For the five cases with 2-fold symmetry (cmm, pmm, pmg, pgg, p2), we start by considering the cmm case. By using rectangular tiles with a special design of the connectivity, where some of the tiles are connected to the midpoint of some other tiles, we can design a deployable structure exhibiting perpendicular reflections and a rotation center off mirrors. Throughout the deployment, the reflectional symmetry is lost while the rotational symmetry is retained, yielding a cmm $\to$ p2 $\to$ p2 group change. For the pmm case, we design a structure by utilizing square tiles and connecting them in a way that forms rectangular holes. This gives a pattern with perpendicular reflections but without any rotation center off the mirrors. The pattern is deployable and remains in the pmm group throughout the deployment (pmm $\to$ pmm $\to$ pmm). For the pmg case, we design a 6-sided arrow-like tile and suitably arrange and connect the tiles with the standard rotating-squares topology. This gives a deployable structure with reflectional symmetry but not perpendicular reflections (and hence belongs to the pmg group). When deployed, the reflectional symmetry is lost but there is a glide reflectional symmetry. Moreover, when fully deployed, the pattern becomes a structure with 4-fold rotational symmetry (pmg $\to$ pgg $\to$ p4g). Next, for the pgg case and the p2 case, we note that the deployable kirigami patterns previously identified in~\cite{liu2021wallpaper} (pgg $\to$ pgg $\to$ cmm and p2 $\to$ p2 $\to$ p4) are already monotile patterns, and hence these two cases are easily fulfilled.

Finally, for the four cases without rotational symmetry (cm, pm, pg, p1), we can easily follow the patterns previously identified in~\cite{liu2021wallpaper} and obtain deployable monotile kirigami patterns for all four cases (cm $\to$ cm $\to$ pm, pm $\to$ cm $\to$ cm, pg $\to$ pg $\to$ cm, p1 $\to$ p1 $\to$ p1). Here, note that all four patterns are with the standard rotating-squares topology but different tile shapes that control the presence of reflections and glide reflections.

\end{proof}

\subsection{Aperiodic monotile kirigami}

Next, we study the existence of aperiodic monotile kirigami structures by exploring different aperiodic tiling patterns. 

Recall that in~\cite{liu2022quasicrystal}, Liu et~al. designed several deployable kirigami structures based on quasicrystal tilings. In particular, three major patterns were considered, namely the Penrose tiling, the Ammann--Beenker tiling, and the Stampfli tiling. The Penrose tiling consists of thinner rhombi and thicker rhombi and has 5-fold rotational symmetry. The Ammann--Beenker tiling consists of squares and rhombi and has 8-fold rotational symmetry. The Stampfli tiling consists of squares, triangles and rhombi and has 12-fold rotational symmetry. Also, to achieve deployability, the authors considered different construction approaches, namely the addition method, removal method, and Hamiltonian method. Specifically, the addition method involves inserting thin tiles in between the existing tiling patterns to change the connectivity and increase the flexibility of the structures. The removal method involves removing certain tiles from the existing tiling patterns, creating holes and gaps to enable deployability. The Hamiltonian method keeps all tiles and focuses on designing a Hamiltonian cycle connecting all tiles, thereby yielding a deployable structure.

\begin{figure}[t!]
    \centering
    \includegraphics[width=\linewidth]{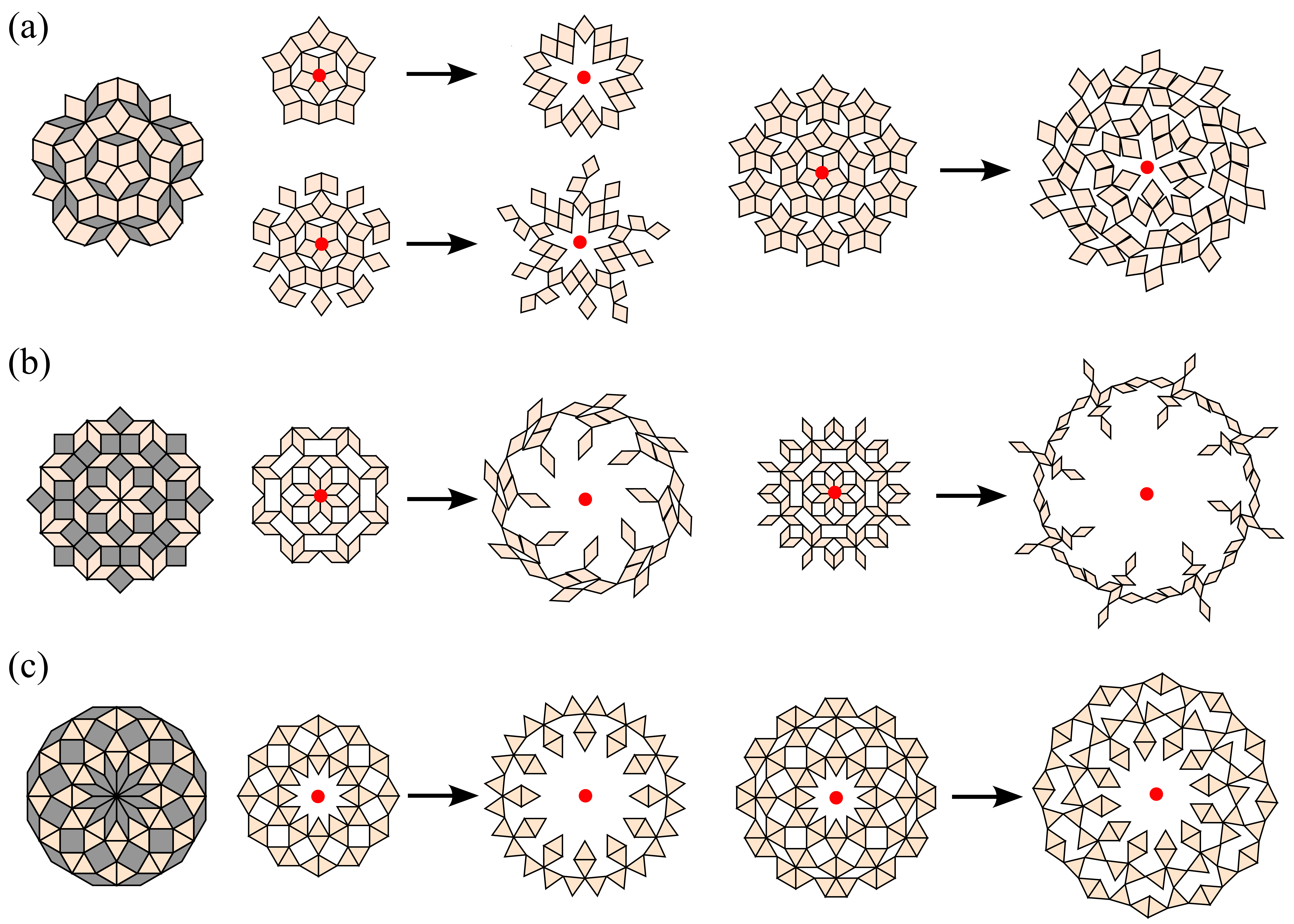}
    \caption{\textbf{Aperiodic monotile kirigami patterns constructed using quasicrystal tilings.} (a) For the 5-fold Penrose tiling, by removing the thinner rhombi (highlighted in grey) as described in~\cite{liu2022quasicrystal}, we can achieve deployable monotile kirigami patterns with different resolutions. (b) For the 8-fold Ammann--Beenker tiling, by removing the square tiles (highlighted in grey) as described in~\cite{liu2022quasicrystal}, we can achieve deployable monotile kirigami patterns with different resolutions. (c) For the 12-fold Stampfli tiling, here we propose removing both the rhombi and squares (highlighted in grey) and keeping only the triangles. This gives deployable monotile kirigami structures with different resolutions. In all examples, the red circle dots show the center of the rotational symmetry. The tile geometry and connectivity remain unchanged throughout the deployment.}
    \label{fig:quasicrystal}
\end{figure}

Note that while the above patterns are aperiodic kirigami patterns, many of them involve multiple types of tiles. Consequently, the addition method and Hamiltonian method described above are not suitable for our construction of monotile kirigami structures. Therefore, here we utilize the removal method and focus on removing tiles from these patterns to achieve the monotile property while preserving deployability. In particular, for the 5-fold Penrose tiling, it was pointed out in~\cite{liu2022quasicrystal} that deployable kirigami structures can be designed by removing all thinner rhombi. Note that the resulting structures are monotile patterns and hence satisfy our requirement (see Fig.~\ref{fig:quasicrystal}(a)). Similarly, for the 8-fold Ammann--Beenker tiling, we can remove all square tiles as described in~\cite{liu2022quasicrystal} to achieve deployable monotile kirigami structures (see Fig.~\ref{fig:quasicrystal}(b)). As for the 12-fold Stampfli tiling, since it consists of three types of tiles (triangles, squares, and rhombi), getting a monotile pattern would require removing two types of tiles. Here, we propose to remove both the squares and rhombi, keeping only the triangles. Then, by suitably connecting the triangles, we can achieve deployable monotile kirigami patterns (see Fig.~\ref{fig:quasicrystal}(c)). For all patterns, the tile connectivity remains unchanged throughout the deployment.

\begin{figure}[t!]
    \centering
    \includegraphics[width=\linewidth]{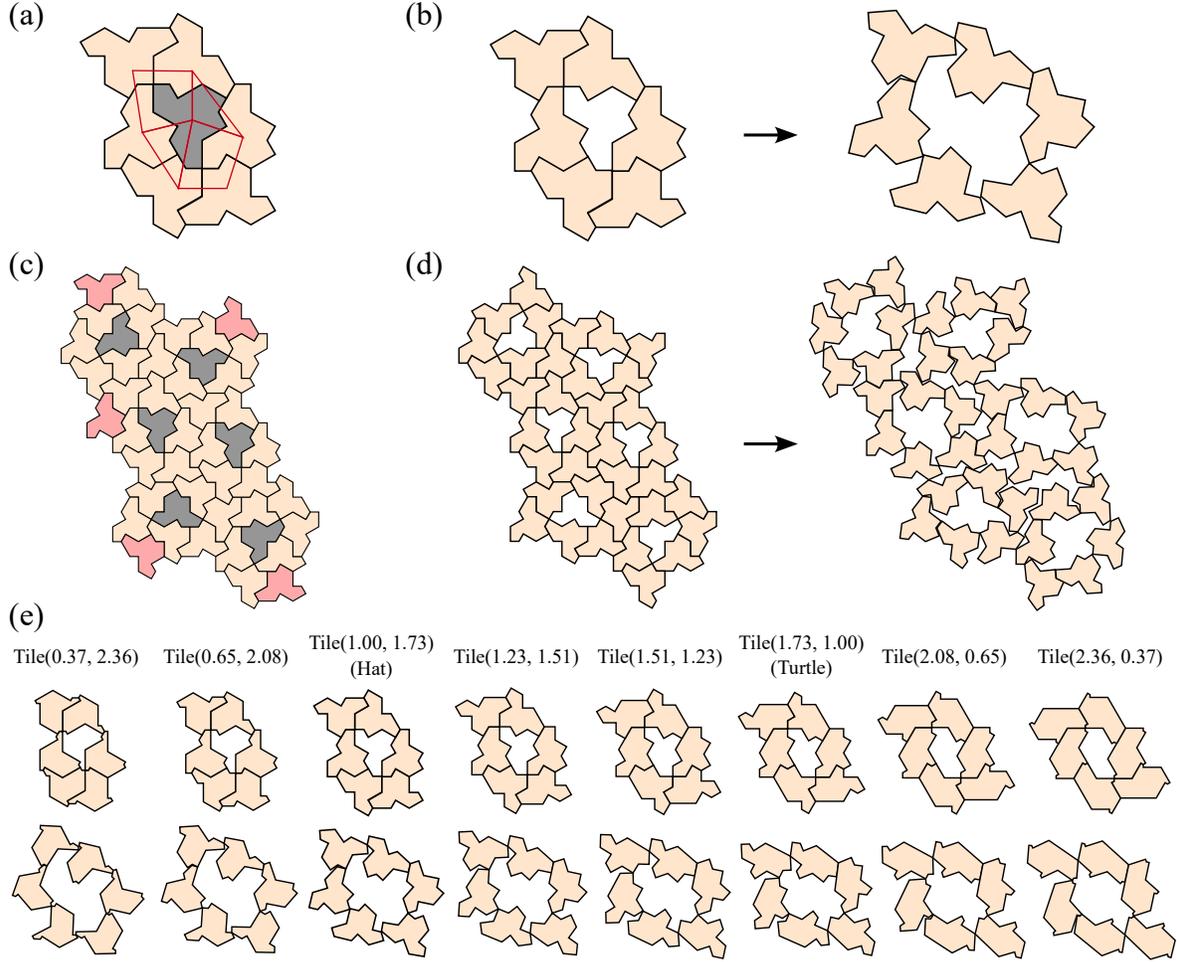}
    \caption{\textbf{Aperiodic monotile kirigami patterns constructed using polykite monotiles.} (a) An aperiodic tiling with the ``Hat'' monotile generated using the first level of the $H_7$ substitution rule~\cite{smith2024aperiodic}. The red lines show the lattice representation of the kirigami structure if we consider adding a connection between all tiles that share an edge. The existence of triangles in the lattice representation indicates that we need to remove the central tile (in grey). (b) By removing the central tile and connecting the remaining 6 tiles, we can achieve a deployable polykite kirigami structure. (c) A larger ``Hat'' tiling generated using the second level of the $H_7$ rule, with a total of 47 tiles. Here, the grey tiles are again to be removed to achieve deployability. The tiles that do not share any common vertex with the grey tiles (highlighted in red) are also to be removed. (d) By removing all the grey and red tiles and connecting the remaining 36 tiles, we can again achieve a deployable kirigami structure. (e) One can also construct monotile kirigami patterns using other polykite geometries, such as ``Turtle'' and other variants~\cite{smith2024aperiodic} in the form of Tile$(a,b)$, where $a$ and $b$ are the lengths of two consecutive sides of the tile. Here, we present several examples (top) and their deployed configurations (bottom). In all examples, the tile geometry and connectivity remain unchanged throughout the deployment.}
    \label{fig:polykite}
\end{figure}

Besides, with the recent discovery of the aperiodic polykite monotile patterns~\cite{smith2024aperiodic}, it is natural to ask whether they can be used to construct deployable kirigami patterns. Here we consider generating various aperiodic polykite monotile patterns, including ``Hat'', ``Turtle'', and other variants Tile$(a,b)$ with different choices of $a$ and $b$, using the generator in \url{https://cs.uwaterloo.ca/~csk/hat/h7h8.html} with the ``$H_7$ substitution rule''.

First, note that the first level of the $H_7$ rule gives 7 polykite tiles. It is noteworthy that if we consider adding a connection between all tiles that share an edge, then the lattice representation of such connections will involve a triangle (see the red edges in Fig.~\ref{fig:polykite}(a)). Therefore, if we connect all 7 tiles, the overall structure will be rigid. Also, all supertiles of the polykite tiles generated based on the $H_7$ rule include these 7 tiles. Hence, to make a deployable structure based on the polykite tiles, we cannot keep all the tiles in our construction. This motivates us to again consider using the removal method as described in~\cite{liu2022quasicrystal} for constructing polykite kirigami patterns consisting of Hat or its variants. More specifically, here we remove the central tile reflected hat so that there are 6 tiles left. Then, we add a connection for every pair of neighboring tiles. This yields a deployable kirigami structure as shown in Fig.~\ref{fig:polykite}(b). 

Next, the second level of the $H_7$ rule gives a total of 47 tiles as shown in Fig.~\ref{fig:polykite}(c). Analogous to the first level, here we remove the central tiles in each cell (highlighted in grey) which are all reflected hats. Moreover, we remove the five tiles that do not share any common vertex with the grey tiles (highlighted in red) which are type T according to the classification of Smith et al., leaving a total of 36 tiles. We can then build the connections between neighboring cells, thereby getting a deployable structure as shown in Fig.~\ref{fig:polykite}(d). By repeating this process, we can extend our construction for even larger patterns. We remark that the above-mentioned construction method is applicable not only for the ``Hat'' tile but also for ``Turtle'' and many other variants of polykite tiles as shown in Fig.~\ref{fig:polykite}(e). These polykite patterns remain connected after deployment.

%%%%%%%%%%%%%%%%%%%%%%%%%%%%%%%%%%%%%%%%%
\section{Theoretical analysis of monotile kirigami patterns} \label{sect:analysis}

After proving the existence of monotile kirigami patterns for both the periodic and aperiodic cases, we perform a systematic analysis of their geometric properties. In this section, we focus on the shape-related properties of them under the deployment. Specifically, we analyze the gain, loss, and preservation of the symmetries of the monotile kirigami patterns and establish several theoretical results.

\subsection{Symmetry change of periodic monotile kirigami patterns}

Note that periodic tiling patterns are characterized by their rotational, reflectional, and glide reflectional symmetries. For periodic monotile kirigami patterns, it is natural to ask whether some or all of these symmetries may be changed or preserved throughout the deployment process. In fact, the answer is affirmative as stated in the following theorem: 

\begin{theorem}
Gain, loss, and preservation of rotational, reflectional, and glide reflectional symmetries are possible for monotile kirigami patterns.
\end{theorem}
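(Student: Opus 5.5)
We prove the statement by exhibiting witnesses among the patterns already constructed in the proof of the previous theorem (Fig.~\ref{fig:monotile_proof}). We use the recorded group changes along each deployment path, (initial) $\to$ (intermediate) $\to$ (fully deployed). There are three symmetry types (rotational, reflectional, glide reflectional) and three behaviours (gain, loss, preservation), so we need to fill a $3\times 3$ table of examples. Each entry comes from comparing the symmetry elements of the groups at consecutive stages, read off from the standard wallpaper group tables.

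\textbf{Rotational symmetry.} Preservation: the p6 pattern (p6 $\to$ p6 $\to$ p6) keeps its 6-fold centres throughout. Loss: the p3m1 pattern (p3m1 $\to$ p1) loses all 3-fold rotations at the intermediate stage. Gain: the same p3m1 pattern goes p1 $\to$ pmm and so gains 2-fold centres. Alternatively, the p2 pattern (p2 $\to$ p2 $\to$ p4) gains 4-fold rotations.

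\textbf{Reflectional symmetry.} Preservation: the pmm pattern keeps perpendicular mirrors throughout. Loss: the cmm pattern (cmm $\to$ p2) loses its mirrors. Gain: the pgg pattern (pgg $\to$ cmm) acquires mirrors at full deployment.

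\textbf{Glide reflectional symmetry.} Preservation: the pg pattern (pg $\to$ pg) keeps its glide axes, as does the pgg pattern at the first stage. Loss: the cmm pattern (cmm $\to$ p2) loses the glides that cmm contains. Gain: the pmg pattern (pmg $\to$ pgg) acquires the glide reflections noted in the construction. The p1 pattern, passing through p1 $\to$ p1 $\to$ p1 and $\to$ pm-type groups, can serve as a backup gain example if needed.

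The main obstacle is making sure each claimed symmetry element genuinely appears or disappears in the configuration, and is not just a change of group label. For example, cmm $\to$ p2 must really destroy the glides, and pmg $\to$ pgg must create glides that are not merely the old mirror composed with a translation. For each cited transition, we will check from the standard group definitions that the later group contains, or lacks, the symmetry element in question. We will also confirm from the figure that the specific deployed configuration indeed realizes that group, so that the gain or loss is genuine.
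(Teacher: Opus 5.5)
Your strategy is the paper's: fill the $3\times 3$ table with witnesses from Fig.~\ref{fig:monotile_proof}. Most of your entries are sound, and several (pgg $\to$ cmm, cmm $\to$ p2, pmm $\to$ pmm, pg $\to$ pg $\to$ cm, p6 $\to$ p6) are witnesses the paper itself uses in its later case-by-case subsections.

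\textbf{The gap is the glide-reflection gain entry.} The paper counts only glide axes off mirrors, and every one of its gain witnesses starts from a group with no such axes. pmg does not fit this. It already has glide axes perpendicular to its mirrors, so they lie off the mirrors. The pmg pattern therefore has nontrivial glide symmetry before deployment, and pmg $\to$ pgg is at most a change of glide axes, not a gain of glide reflectional symmetry. Even on the weaker reading ``some new glide axis appears'', the only candidates in pgg run parallel to the lost mirrors. Whether they are anything more than the broken mirror lines, i.e.\ the old mirror composed with a translation, is exactly the check you flagged and deferred. Your backup also fails: the p1 pattern is p1 $\to$ p1 $\to$ p1 with no glides at any stage, and pm has no glide axes off its mirrors either.

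\textbf{The fix} is to use the pm pattern (pm $\to$ cm $\to$ cm), which is the paper's choice. In pm, composing a mirror with a lattice translation gives either another mirror or a glide along a mirror, so there are no glide axes off mirrors. The centred lattice of cm produces glide axes halfway between adjacent mirrors, so the gain is genuine. Alternatively, pmm $\to$ p6m or p4 $\to$ p4m in Fig.~\ref{fig:monotile_properties_gain}(e),(g) would work.

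A smaller point: for rotational gain, lead with p2 $\to$ p4 rather than the p3m1 pattern. End to end, the p3m1 pattern goes from 3-fold to 2-fold, which the paper treats as a special symmetry change rather than a gain.
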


\begin{proof}
We prove the statement by identifying explicit examples for different cases. 
    
For rotational symmetry, note that the p2 monotile kirigami pattern in Fig.~\ref{fig:monotile_proof} illustrates a gain in rotational symmetry from 2 to 4 throughout the deployment process. As for loss of rotational symmetry, the p4g monotile kirigami pattern in Fig.~\ref{fig:monotile_proof} shows a loss in rotational symmetry from 4 to 2 throughout the deployment process. Besides, the p4 monotile kirigami pattern in Fig.~\ref{fig:monotile_proof} shows a preservation of rotational symmetry throughout the deployment process.

For reflectional symmetry, note that the pg monotile kirigami pattern in Fig.~\ref{fig:monotile_proof} illustrates a gain in reflectional symmetry throughout the deployment process. As for loss of reflectional symmetry, the p31m monotile kirigami pattern in Fig.~\ref{fig:monotile_proof} shows a loss in reflectional symmetry throughout the deployment process. Besides, the p4m monotile kirigami pattern in Fig.~\ref{fig:monotile_proof} shows a preservation of reflectional symmetry throughout the deployment process.

For glide reflectional symmetry, here we focus exclusively on glide reflection axes off mirrors so that the analysis is nontrivial (otherwise, any reflection axis can already serve as a glide reflection axis). Note that the pm monotile kirigami pattern in Fig.~\ref{fig:monotile_proof} illustrates a gain in glide reflectional symmetry throughout the deployment process. As for loss of glide reflectional symmetry, the cm monotile kirigami pattern in Fig.~\ref{fig:monotile_proof} shows a loss in glide reflectional symmetry throughout the deployment process. Besides, the p4m monotile kirigami pattern in Fig.~\ref{fig:monotile_proof} shows a preservation of glide reflectional symmetry throughout the deployment process.

Altogether, the above analysis shows that gain, loss, and preservation of all three kinds of symmetry are possible for monotile kirigami patterns.

\end{proof}

While the above results show that the gain, loss, and preservation of symmetry are all possible, one may be interested in further exploring the space of all possible symmetry changes more precisely. In the rest of this subsection, we further analyze specific cases of gain, loss, and preservation of symmetries.

\begin{figure}[t!]
    \centering
    \includegraphics[width=\linewidth]{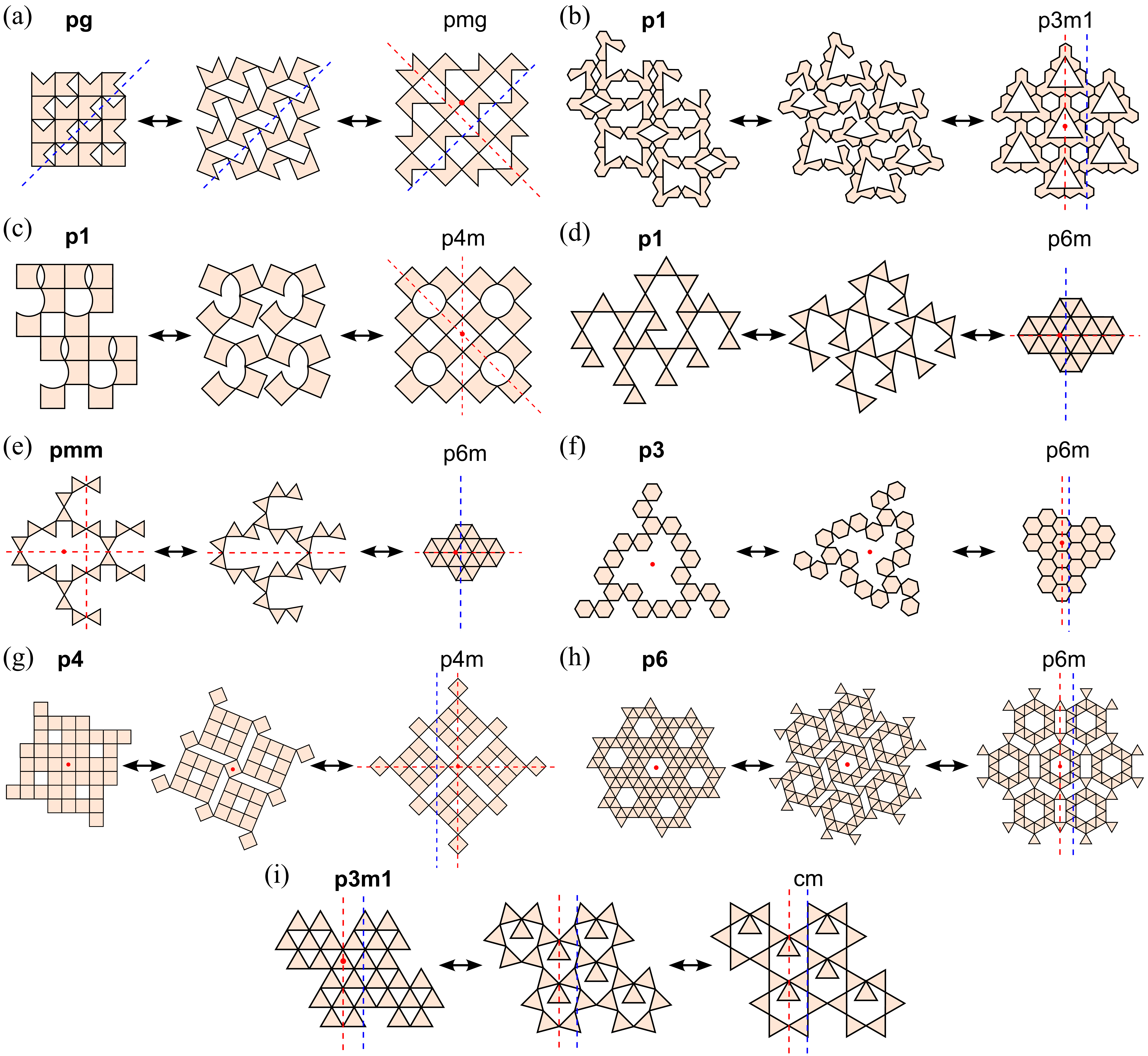}
    \caption{\textbf{Additional examples for gain, loss, and preservation of symmetry in periodic monotile kirigami structures.} (a) A pg $\leftrightarrow$ pmg pattern. (b) A p1 $\leftrightarrow$ p3m1 pattern. (c) A p1 $\leftrightarrow$ p4m pattern. (d) A p1 $\leftrightarrow$ p6m pattern. (e) A pmm $\leftrightarrow$ p6m pattern. (f)~A p3 $\leftrightarrow$ p6m pattern. (g) A p4 $\leftrightarrow$ p4m pattern. (h) A p6 $\leftrightarrow$ p6m pattern. (i) A p3m1 $\leftrightarrow$ cm pattern. Here, every red dot represents a rotation center, every red dashed line represents a reflectional axis, and every blue dashed line represents a glide reflectional axis. In all examples, the tile geometry and connectivity remain unchanged throughout the deployment.}
    \label{fig:monotile_properties_gain}
\end{figure}

\subsubsection{Gain of symmetry}
We start by analyzing the possible gain of symmetry under the deployment of monotile kirigami patterns.

First, for rotational symmetry, we identify the following examples to show the possibility of having any $m$-fold to $n$-fold rotational symmetry gain, where $m,n = 1,2,3,4,6$ with $m < n$ and $n$ is divisible by $m$ ($m|n$):
\begin{itemize}
    \item $1\to 2$: The pg $\to$ pmg pattern in Fig.~\ref{fig:monotile_properties_gain}(a). 
    \item $1\to 3$: The p1 $\to$ p3m1 pattern in Fig.~\ref{fig:monotile_properties_gain}(b). 
    \item $1\to 4$: The p1 $\to$ p4m pattern in Fig.~\ref{fig:monotile_properties_gain}(c) (previously identified in~\cite{liu2021wallpaper}).   
    \item $1\to 6$: The p1 $\to$ p6m pattern in pattern in Fig.~\ref{fig:monotile_properties_gain}(d) (previously identified in~\cite{liu2021wallpaper}). 
    \item $2\to 4$: The p2 $\to$ p4 pattern in Fig.~\ref{fig:monotile_proof}.
    \item $2\to 6$: The pmm $\to$ p6m pattern in Fig.~\ref{fig:monotile_properties_gain}(e) (previously identified in~\cite{liu2021wallpaper}). 
    \item $3\to 6$: The p3 $\to$ p6 pattern in Fig.~\ref{fig:monotile_proof}.
\end{itemize}

\noindent For reflectional symmetry, for each $n = 1,2,3,4,6$, we identify an example of a deployable monotile kirigami pattern with $n$-fold rotational symmetry that gains its reflectional symmetry after the deployment:

\begin{itemize}
\item $n=1$: The pg $\to$ cm pattern in Fig.~\ref{fig:monotile_proof}. Note that the red dotted line at the final state shows the gained reflectional symmetry axis.

\item $n=2$: The pgg $\to$ cmm pattern in Fig.~\ref{fig:monotile_proof}. Note that the red dotted lines at the final state show the gained reflectional symmetry axes.

\item $n=3$: The p3 $\to$ p6m pattern in Fig.~\ref{fig:monotile_properties_gain}(f) (previously identified in~\cite{liu2021wallpaper}). Note that the red dotted line at the final state shows the gained reflectional symmetry axis.

\item $n=4$: The p4 $\to$ p4m pattern in Fig.~\ref{fig:monotile_properties_gain}(g). Note that the red dotted lines at the final state show the gained reflectional symmetry axes.

\item $n=6$: The p6 $\to$ p6m pattern in Fig.~\ref{fig:monotile_properties_gain}(h).  Note that the red dotted line at the final state shows the gained reflectional symmetry axis.

\end{itemize}

\noindent For glide reflectional symmetry, for each $n = 1,2,3,4,6$, we identify an example of a deployable monotile kirigami pattern with $n$-fold rotational symmetry that gains its glide reflectional symmetry after the deployment (again, here we focus on non-trivial glide reflection axes off mirrors):

\begin{itemize}
\item $n=1$: The p1 $\to$ p6m pattern in Fig.~\ref{fig:monotile_properties_gain}(d). Note that the blue dotted line at the final state shows the gained glide reflectional symmetry axis. 

\item $n=2$: The pmm $\to$ p6m pattern in Fig.~\ref{fig:monotile_properties_gain}(e). Note that the blue dotted line at the final state shows the gained glide reflectional symmetry axis.

\item $n=3$: The p3 $\to$ p6m pattern in Fig.~\ref{fig:monotile_properties_gain}(f). Note that the blue dotted line at the final state shows the gained glide reflectional symmetry axis.

\item $n=4$: The p4 $\to$ p4m pattern in Fig.~\ref{fig:monotile_properties_gain}(g). Note that the blue dotted line at the final state shows the gained glide reflectional symmetry axis.

\item $n=6$: The p6 $\to$ p6m pattern in Fig.~\ref{fig:monotile_properties_gain}(h). Note that the blue dotted line at the final state shows the gained glide reflectional symmetry axis.

\end{itemize}

\subsubsection{Loss of symmetry}
\noindent For rotational symmetry, we identify the following examples to show the possibility of having any $m$-fold $\to$ $n$-fold rotational symmetry loss, where $m,n = 1,2,3,4,6$ with $m > n$ and $n|m$:

\begin{itemize}

    \item $2\to 1$: The pmg $\to$ pg pattern in Fig.~\ref{fig:monotile_properties_gain}(a).
    
    \item $3\to 1$: The p3m1 $\to$ p1 pattern in Fig.~\ref{fig:monotile_proof}.
    
    \item $4\to 1$: The p4m $\to$ p1 pattern in Fig.~\ref{fig:monotile_properties_gain}(c).
    \item $4\to 2$: The p4g $\to$ cmm pattern in Fig.~\ref{fig:monotile_proof}.
    
    \item $6\to 1$: The p6m $\to$ p1 pattern in Fig.~\ref{fig:monotile_properties_gain}(d).
    \item $6\to 2$: The p6m $\to$ pmm pattern in Fig.~\ref{fig:monotile_properties_gain}(e).
    \item $6\to 3$: The p6m $\to$ p3 pattern in Fig.~\ref{fig:monotile_properties_gain}(f).

\end{itemize}

\noindent For reflectional symmetry, for each $n = 1,2,3,4,6$, we identify an example of a deployable monotile kirigami pattern with $n$-fold rotational symmetry that loses its reflectional symmetry after the deployment:

\begin{itemize}
    \item $n=1$: The pmg $\to$ pg pattern in Fig.~\ref{fig:monotile_properties_gain}(a) (the red dotted line).
    \item $n=2$: The cmm $\to$ p2 pattern in Fig.~\ref{fig:monotile_proof} (the red dotted lines).
    \item $n=3$: The p31m $\to$ p3 pattern in Fig.~\ref{fig:monotile_proof} (the red dotted line).
    \item $n=4$: The p4m $\to$ p1 pattern in Fig.~\ref{fig:monotile_properties_gain}(c) (the red dotted lines). 
    \item $n=6$: The p6m $\to$ p1 pattern in Fig.~\ref{fig:monotile_properties_gain}(d) (the red dotted line).
\end{itemize}

\noindent For glide reflectional symmetry, for each $n = 1,2,3,4,6$, we identify an example of a deployable monotile kirigami pattern with $n$-fold rotational symmetry that loses its glide reflectional symmetry after the deployment:
\begin{itemize}
\item $n=1$: The cm $\to$ pm pattern in Fig.~\ref{fig:monotile_proof} (the blue dotted line). 

\item $n=2$: The cmm $\to$ p2 pattern in Fig.~\ref{fig:monotile_proof} (the blue dotted line). 

\item $n=3$: The p31m $\to$ p3 pattern in Fig.~\ref{fig:monotile_proof} (the blue dotted line). 

\item $n=4$: The p4m $\to$ p4 pattern in Fig.~\ref{fig:monotile_properties_gain}(g) (the blue dotted line).

\item $n=6$: The p6m $\to$ p1 pattern in Fig.~\ref{fig:monotile_properties_gain}(d) (the blue dotted line).

\end{itemize}

\subsubsection{Preservation of symmetry}
\noindent For rotational symmetry, we identify the following examples to show the possibility of preserving the $n$-fold rotational symmetry throughout the entire deployment process, where $n = 1,2,3,4,6$:

\begin{itemize}

    \item $1\to 1$: The p1 $\to$ p1 pattern in Fig.~\ref{fig:monotile_proof}.
    \item $2\to 2$: The cmm $\to$ p2 pattern in Fig.~\ref{fig:monotile_proof}.
    \item $3\to 3$: The p31m $\to$ p3 pattern in Fig.~\ref{fig:monotile_proof}. 
    \item $4\to 4$: The p4 $\to$ p4 pattern in Fig.~\ref{fig:monotile_proof}.    
    \item $6\to 6$: The p6 $\to$ p6 pattern in Fig.~\ref{fig:monotile_proof}.

\end{itemize}

\noindent For reflectional symmetry, for each $n = 1,2,3,4,6$, we identify an example of a deployable monotile kirigami pattern with $n$-fold rotational symmetry that preserves its reflectional symmetry after the deployment:
\begin{itemize}
    \item $n = 1$: The cm $\to$ pm pattern in Fig.~\ref{fig:monotile_proof} (the red dotted line).
    \item $n = 2$: The pmm $\to$ pmm pattern in Fig.~\ref{fig:monotile_proof} (the red dotted line).
    \item $n = 3$: The p3m1 $\to$ cm pattern in Fig.~\ref{fig:monotile_properties_gain}(i) (the red dotted line).
    \item $n = 4$: The p4m $\to$ p4m pattern in Fig.~\ref{fig:monotile_proof} (the red dotted line). 
    \item $n = 6$: The p6m $\to$ p6m pattern in Fig.~\ref{fig:monotile_proof} (the red dotted line).
    
\end{itemize}

\noindent For glide reflectional symmetry, for each $n = 1,2,3,4,6$, we identify an example of a deployable monotile kirigami pattern with $n$-fold rotational symmetry that preserves its glide reflectional symmetry throughout the deployment:
\begin{itemize}
    \item $n = 1$: The pg $\to$ cm pattern in Fig.~\ref{fig:monotile_proof} (the blue dotted line).
    \item $n = 2$: The pgg $\to$ cmm pattern in Fig.~\ref{fig:monotile_proof} (the blue dotted line).
    \item $n = 3$: The p3m1 $\to$ cm pattern in Fig.~\ref{fig:monotile_properties_gain}(i) (the blue dotted line).
    \item $n = 4$: The p4m $\to$ p4m pattern in Fig.~\ref{fig:monotile_proof} (the blue dotted line). 
    \item $n = 6$: The p6m $\to$ p6m pattern in Fig.~\ref{fig:monotile_proof} (the blue dotted line).

\end{itemize}

\subsubsection{Summary}
From the above analysis, we can summarize the possible symmetry changes as follows:

\begin{theorem}
    For any $(m,n)\in {1,2,3,4,6}$ and $m|n$ or $n|m$, it is possible to design a deployable monotile kirigami pattern that achieves an $m$-fold to $n$-fold rotational symmetry change.

\end{theorem}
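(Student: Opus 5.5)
The plan is to prove the statement by exhausting the finitely many admissible pairs $(m,n)$ and pointing to an explicit deployable monotile kirigami pattern for each. By the crystallographic restriction, the possible rotational orders of a wallpaper pattern are $1,2,3,4,6$. The pairs with $m|n$ or $n|m$ therefore split into three disjoint classes: the diagonal pairs $m=n$ (preservation), the pairs $m<n$ with $m|n$ (gain), and the pairs $m>n$ with $n|m$ (loss). Note that pairs such as $(2,3)$, $(3,4)$ or $(4,6)$ are excluded by the hypothesis, so no example is needed for them.

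For the gain class, the admissible pairs are $(1,2),(1,3),(1,4),(1,6),(2,4),(2,6),(3,6)$. I would cite directly the list in the subsection on gain of symmetry: pg $\to$ pmg, p1 $\to$ p3m1, p1 $\to$ p4m, p1 $\to$ p6m, p2 $\to$ p4, pmm $\to$ p6m, and p3 $\to$ p6. For the loss class, the pairs are $(2,1),(3,1),(4,1),(4,2),(6,1),(6,2),(6,3)$. These are covered by the list in the subsection on loss of symmetry, which consists of the reverse deployments of the patterns in Fig.~\ref{fig:monotile_properties_gain} together with p3m1 $\to$ p1 and p4g $\to$ cmm from Fig.~\ref{fig:monotile_proof}. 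For the diagonal class, I would use the p1, cmm$\to$p2, p31m$\to$p3, p4 and p6 patterns listed under preservation of symmetry. In each case I would check that the rotational order read off from the wallpaper group label matches: the groups p1, pm, pg and cm have order 1; p2, pmm, pmg, pgg and cmm have order 2; p3, p3m1 and p31m have order 3; p4, p4m and p4g have order 4; and p6 and p6m have order 6.

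One subtlety is how to read ``$m$-fold to $n$-fold change'' for a pattern whose group passes through an intermediate stage, such as p4g $\to$ pgg $\to$ cmm. I would interpret the change as between the initial and final configurations of the deployment. Since every deployment is reversible (the tile geometry and connectivity are unchanged throughout), any gain example read backwards also gives the corresponding loss example. This reversibility gives an independent way to cover the loss class and makes the argument robust.

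The main obstacle is not logical but evidential. Correctness rests on the figures actually exhibiting the claimed symmetry groups at the endpoints, for example that the final state of the p3 pattern really has 6-fold centers, and that each pattern remains deployable, with the group identification based on tile geometry only, as stipulated in the definition. I would address this by verifying, for each cited example, one rotation center of the claimed order in each endpoint configuration and the absence of a higher-order center, which suffices to fix $m$ and $n$.
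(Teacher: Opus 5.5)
Your proposal is essentially the paper's own argument: the theorem is just the summary of the gain, loss and preservation lists, and you enumerate the same admissible pairs, cite the same examples, and read off the rotation orders of the 17 groups correctly. The one thing to fix is the $(3,1)$ case. The paper's example there is the p3m1 $\to$ p1 $\to$ pmm pattern of Fig.~\ref{fig:monotile_proof} read at its intermediate state. Under your initial-versus-final convention this is a $3\to 2$ change, as the paper itself remarks. So for $(3,1)$ your reversibility argument is actually needed, not just a backup: apply it to the p1 $\leftrightarrow$ p3m1 pattern of Fig.~\ref{fig:monotile_properties_gain}(b), exactly as the paper does for its other loss cases drawn from that figure.
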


\begin{theorem}
    For $n=1,2,3,4,6$, it is possible to design a deployable monotile kirigami pattern with $n$-fold rotational symmetry that achieves any target reflectional symmetry change (gain/loss/preservation).
\end{theorem}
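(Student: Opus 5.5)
The plan is to prove the statement by explicit examples, organised as a $5\times 3$ case table. The rows are indexed by $n\in\{1,2,3,4,6\}$. The columns are the three targets: gain, loss, and preservation of reflectional symmetry. Each of the fifteen cells needs one deployable monotile kirigami pattern, taken from Fig.~\ref{fig:monotile_proof} or Fig.~\ref{fig:monotile_properties_gain}. The pattern must carry $n$-fold rotational symmetry in the state where the reflection is present or absent, and its reflectional status must change, or stay fixed, in the required way during deployment. Deployability and the monotile property for every such pattern are already guaranteed by the existence theorem for the 17 wallpaper groups and by the constructions shown in the figures. So the only thing left to verify in each cell is the symmetry bookkeeping.

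I would assemble the three columns directly from the preceding subsubsections.
\begin{itemize}
\item \emph{Gain.} Use pg $\to$ cm ($n=1$), pgg $\to$ cmm ($n=2$), p3 $\to$ p6m ($n=3$), p4 $\to$ p4m ($n=4$) and p6 $\to$ p6m ($n=6$).
\item \emph{Loss.} Use pmg $\to$ pg, cmm $\to$ p2, p31m $\to$ p3, p4m $\to$ p1 and p6m $\to$ p1.
\item \emph{Preservation.} Use cm $\to$ pm, pmm $\to$ pmm, p3m1 $\to$ cm, p4m $\to$ p4m and p6m $\to$ p6m.
\end{itemize}
For each cell I would check two facts against the standard classification of wallpaper groups. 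First, the starting group has $n$-fold rotational symmetry as its highest rotation order. Second, the initial and final groups respectively contain or omit mirror lines as the target requires: pg, pgg, p3, p4 and p6 have no mirrors, whereas cm, cmm, p6m, p4m and pmg do.

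The main obstacle is interpreting ``with $n$-fold rotational symmetry'' consistently. In several examples the rotation order itself changes along the path, for instance p3 $\to$ p6m, p4m $\to$ p1 and p3m1 $\to$ cm. I would resolve this by fixing the convention that $n$ is the rotational order of the initial configuration. Under this convention the reflection change is read off between the initial and final configurations, and an independent change in rotation order is allowed. With that convention every listed example qualifies. If a stricter reading is wanted, in which the $n$-fold rotation must persist, the cases with $n=3,4,6$ for loss and preservation would need substitutes. For those I would first try the tile-modification strategy of the p31m and p4 constructions, adjusting the tile shape or the connectivity while keeping the rotation centres fixed.
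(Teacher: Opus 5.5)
Your plan is the paper's: the theorem only summarises the three reflection lists in the gain/loss/preservation subsubsections, and your fifteen examples are exactly those. The problem is the verification you add on top. You take $n$ to be the highest rotation order of the \emph{initial} configuration and assert that every example then qualifies. This fails in the loss cell for $n=1$. The group pmg has 2-fold rotation centres (off its mirrors), which is why the paper itself also uses the pattern of Fig.~\ref{fig:monotile_properties_gain}(a) as its $2\to 1$ rotation-loss example. Under your convention, pmg $\to$ pg is therefore a second $n=2$ loss example, and none of your other listed examples fills the $n=1$ loss cell.

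The repair is cheap. The paper treats deployments as reversible: the patterns of Fig.~\ref{fig:monotile_properties_gain}(a) and (c)--(f) are each used in both directions. So run the pg $\to$ pg $\to$ cm pattern backwards. The group cm has no rotations and has mirrors, while pg has no mirrors, so this gives a genuine $n=1$ loss. Alternatively, read ``with $n$-fold rotational symmetry'' as ``admits a rotation of order $n$''. Then pmg $\to$ pg qualifies, but the $n=1$ row becomes vacuous and your ``highest rotation order'' check has to be dropped.

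Your fallback for the stricter reading also picks out the wrong cells. p31m $\to$ p3 keeps its 3-fold centres, and p4m $\to$ p4g $\to$ p4m keeps its 4-fold centres throughout, so neither needs a substitute. The examples that actually lose their $n$-fold rotation are p4m $\to$ p1, p6m $\to$ p1 and p3m1 $\to$ cm. In addition, the kagome p6m $\to$ p31m $\to$ p6m has only 3-fold symmetry at intermediate stages, which matters if persistence must hold throughout the deployment.
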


\begin{theorem}
    For $n=1,2,3,4,6$, it is possible to design a deployable monotile kirigami pattern with $n$-fold rotational symmetry that achieves any target glide reflectional symmetry change (gain/loss/preservation).
\end{theorem}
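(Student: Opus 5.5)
The statement is existential, so I would prove it by exhibiting examples: for each $n\in\{1,2,3,4,6\}$ and each of the three behaviours (gain, loss, preservation) of a glide reflection axis, one deployable monotile kirigami pattern with $n$-fold rotational symmetry. As in the preceding subsections, I would count only glide reflection axes \emph{off mirrors}. Otherwise every mirror would automatically be a glide axis and the statement would reduce to the reflectional case. This gives fifteen cases, and I would read each off from the patterns already constructed in Fig.~\ref{fig:monotile_proof} and Fig.~\ref{fig:monotile_properties_gain}.

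For each case I would collect the required example from the lists above.

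\textbf{Gain.} I would use:
\begin{itemize}
\item p1 $\to$ p6m ($n=1$);
\item pmm $\to$ p6m ($n=2$);
\item p3 $\to$ p6m ($n=3$);
\item p4 $\to$ p4m ($n=4$);
\item p6 $\to$ p6m ($n=6$).
\end{itemize}
In each, the terminal group contains glide reflections off mirrors, while the initial group is one of p1, pmm, p3, p4 or p6, none of which has an off-mirror glide.

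\textbf{Loss.} I would use:
\begin{itemize}
\item cm $\to$ pm ($n=1$);
\item cmm $\to$ p2 ($n=2$);
\item p31m $\to$ p3 ($n=3$);
\item p4m $\to$ p4 ($n=4$);
\item p6m $\to$ p1 ($n=6$).
\end{itemize}
Here the initial group has an off-mirror glide axis and the deployed group (pm, p2, p3, p4, p1) has none.

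\textbf{Preservation.} I would use:
\begin{itemize}
\item pg $\to$ cm ($n=1$);
\item pgg $\to$ cmm ($n=2$);
\item p3m1 $\to$ cm ($n=3$);
\item p4m $\to$ p4m ($n=4$);
\item p6m $\to$ p6m ($n=6$).
\end{itemize}
For each of these I must exhibit a specific glide axis that persists throughout deployment.

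The step I expect to be the main obstacle is the preservation case. Membership in a glide-containing group at both endpoints does not by itself show that the \emph{same} off-mirror glide axis survives along the whole deployment path. For pg $\to$ cm, for instance, I would need to confirm that the glide axis of the contracted pg state is not one of the mirrors of the final cm state.

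I would handle this by choosing, in each preservation example, the glide axis marked in blue in the figure. I would then argue that the deployment path is equivariant under that glide: the glide maps the tile set and its connectivity to itself, so by uniqueness of the one-parameter deployment (or by applying the glide to the whole deployment) the intermediate configurations remain glide-invariant. I would also check that the axis passes between mirror lines of every intermediate configuration. Similarly, for the gain and loss cases I would verify that the initial (respectively final) state has no off-mirror glide, using the standard classification of glides in each wallpaper group.

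Finally, I would note a subtlety for the $n$-indexing. The theorem attaches "$n$-fold rotational symmetry" to the pattern, but several examples change their rotation order. I would adopt the convention used in the reflectional case, namely that $n$ refers to the state in which the relevant symmetry is present. Under that convention the fifteen examples cover every combination, which completes the proof.
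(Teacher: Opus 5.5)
Your fifteen examples are exactly the ones the paper lists. The paper's proof is just that list, with each glide axis read off from the blue dotted lines in Fig.~\ref{fig:monotile_proof} and Fig.~\ref{fig:monotile_properties_gain}. The one step that fails is your indexing convention in the last paragraph. The paper indexes $n$ by the rotational order of the left-hand (starting) group in each listed transition, and it does this in the reflectional lists too. For instance, p3 $\to$ p6m is its $n=3$ example of reflection gain, even though the mirror appears only in the 6-fold final state. Your convention, that $n$ refers to the state in which the symmetry is present, therefore misdescribes the reflectional case. It also breaks your own gain list: in p1 $\to$ p6m, pmm $\to$ p6m and p3 $\to$ p6m the off-mirror glide exists only in the p6m state. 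Under your convention all three count as $n=6$, gain for $n=1,2,3$ is left uncovered, and your closing claim that the fifteen examples cover every combination is false. If you index by the starting group instead, your lists coincide with the paper's and the coverage claim holds.

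Second, the equivariance argument you sketch for preservation is not something the paper uses, and as stated it does not go through. It presupposes that the marked glide preserves the hinge assignment, not merely the tile geometry. That is not automatic, because the paper assigns wallpaper groups from tile geometry alone. In the rotating-squares p4m pattern, for example, the mirrors through square centres are symmetries of the closed tiling but not of its hinges, and they are exactly the mirrors lost in the p4g state. The argument also presupposes a unique deployment path with the starting configuration as an endpoint, which you have not established. Patterns with reduced connectivity can have more than one degree of freedom. When motion is possible in two directions, an orientation-reversing symmetry may swap the two branches rather than fix each configuration. The reliable fix is to check directly that the chosen axis is an off-mirror glide of each depicted deployed configuration, which is what the paper's figures show.
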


We remark that there exist other special symmetry changes not covered by the above theorems. For instance, the p3m1$\to$ pmm pattern in Fig.~\ref{fig:monotile_proof} exhibits a $3$-fold to $2$-fold rotationally symmetry change.

\subsection{Symmetry change of aperiodic monotile kirigami patterns}

It is natural to ask whether we can further analyze the theoretical properties of aperiodic monotile kirigami patterns. While the periodic patterns can be analyzed based on the theory of wallpaper groups, it is much more difficult to analyze the aperiodic kirigami patterns systematically due to the relatively limited understanding of aperiodic tilings and their constructions in general.

Here, we first consider the rotational symmetry change of the aperiodic monotile kirigami patterns that we have identified. For the Penrose-based monotile kirigami pattern, we see that the 5-fold symmetry can be preserved throughout the deployment process. Similarly, for the Ammann--Beenker-based monotile kirigami pattern, the 8-fold symmetry can be preserved under deployment, and the 12-fold symmetry of the Stampfli-based monotile kirigami pattern can also be preserved under deployment. As for the polykite monotile kirigami patterns and the variants, no rotational symmetry can be preserved under deployment in general.

Besides, note that while the aperiodic patterns lack translational symmetry, we may still study their properties with respect to reflectional symmetry. First, we note that for the Penrose-based monotile kirigami patterns, the initial state contains 5 reflection axes. However, during deployment, all reflectional symmetries will be lost. Similarly, while the Ammann--Beenker-based monotile kirigami patterns have 8 reflection axes at the initial state, all the reflectional symmetries will be lost under the deployment. As for the stampfli-based monotile kirigami patterns, we can also see that there are 12 reflection axes at the initial state, but all reflectional symmetries will be lost as we consider sufficiently high resolutions. As for the polykite monotile kirigami patterns and variants, they do not exhibit any reflectional symmetry in either the contracted or deployed state.

Altogether, for aperiodic monotile kirigami patterns, our analysis suggests that while the rotational symmetry of them can be preserved throughout the deployment process for some cases, the reflectional symmetry present at the initial contracted state will be lost under the deployment in general. This shows a significant difference between periodic and aperiodic monotile kirigami patterns.

%%%%%%%%%%%%%%%%%%%%%%%%%%%%%%%%%%%%%%%%%

\section{Computational analysis of monotile kirigami patterns} \label{sect:numerical}

Note that the above analysis on the gain, loss, and preservation of symmetry in monotile kirigami patterns has only focused on the shape change of them throughout deployment. An equally important aspect of these patterns is their size change, which may largely affect the suitability of different kirigami structures for certain applications. In this section, we perform a computational analysis of the monotile kirigami patterns we designed or identified. Specifically, we utilize the 2D kirigami deployment simulator~\cite{liu2022quasicrystal} and perform deployment simulations of different patterns, from which we further extract the size change measures and analyze them. 

More specifically, for each kirigami structure, we set radial springs on certain boundary vertices to pull them outward. The overall structure is then deployed continuously under rigid-body motion in the simulator, with all tile geometry and connectivity preserved. The deployment is then completed as the overall structure reaches a steady state.

\subsection{Numerical results on the geometric change of monotile kirigami patterns under deployment}

Here, to assess the size change ratio of the kirigami pattern under deployment, we first consider the convex hull of certain boundary vertices in the input configuration of the pattern. We can then consider the ratio
\begin{equation}
    \text{SCR} = \frac{\text{Area($H_{\text{final}}$)}}{\text{Area($H_{\text{initial}}$)}},
\end{equation}
where $H_{\text{initial}}$ is the region formed by the convex hull vertices at the initial state, and $H_{\text{final}}$ is the region formed by the corresponding vertices after the deployment. Here, the area of a polygon with vertices $(x_1, y_1)$, $\dots$, $(x_n, y_n)$ can be given by the shoelace formula:
\begin{equation}
    \text{Area} = \frac{1}{2}\left|\sum_{i=1}^n (x_i y_{i+1} - x_{i+1}y_i)\right|,
\end{equation}
where $x_{n+1} = x_1$ and $y_{n+1} = y_1$.

\begin{table}[t]
    \centering
    \begin{tabular}{c|c|c}
        \textbf{Pattern} & \textbf{Size change ratio} & \textbf{Perimeter change ratio} \\ \hline 
         p6m $\to$ p31m $\to$ p6m & 2.8320 &1.7472\\
         p6 $\to$ p6 $\to$ p6 &1.2944&1.1377\\
         p4m $\to$ p4g $\to$ p4m  & 1.5003 & 1.2375 \\
         p4g $\to$ pgg $\to$ cmm &1.4912&1.2739\\
         p4 $\to$ p4 $\to$ p4 &1.2790 &1.1931\\
         
        p31m $\to$ p3 $\to$ p3 & 1.8649&1.4274\\
        p3m1 $\to$ p1 $\to$ pmm &1.1271&1.0389\\
        p3 $\to$ p3 $\to$ p6  &1.2469&1.1401\\
        
        cmm $\to$ p2 $\to$ p2 &1.4940&1.5133\\
         pmm $\to$ pmm $\to$ pmm &1.7363 &1.3262\\
         p2 $\to$ p2 $\to$ p4  &  1.7904& 1.1106\\
         
         pmg $\to$ pgg $\to$ p4g &1.2964&1.2315\\
         pgg $\to$ pgg $\to$ cmm &1.5575&1.3147\\
         pg $\to$ pg $\to$ cm  &1.5549&1.2986\\
         
          cm $\to$ cm $\to$ pm &1.5152&1.2428\\
         pm $\to$ cm $\to$ cm &1.3873&1.2334\\
         p1 $\to$ p1 $\to$ p1 &  1.4609&1.2310\\

    \end{tabular}
    \caption{Analysis of the size change ratio and perimeter change ratio for different periodic monotile kirigami patterns.}
    \label{tab:patterns}
\end{table}

Similarly, using the convex hull vertices, we can also consider the perimeter change ratio (PCR) of the monotile kirigami patterns under deployment:
\begin{equation}
    \text{PCR}=\frac{\text{Perimeter($H_{\text{final}}$)}}{\text{Perimeter($H_{\text{initial}}$)}},
\end{equation}
where the perimeter of a polygon with vertices $(x_1, y_1)$, $\dots$, $(x_n, y_n)$ can be calculated by 
\begin{equation}
    \text{Perimeter} = \sum_{i=1}^n \sqrt{(x_i - x_{i+1})^2 +(y_i-y_{i+1})^2}.
\end{equation}

In Table~\ref{tab:patterns}, we consider the 17 monotile kirigami patterns presented in Fig.~\ref{fig:monotile_proof} and study the SCR and PCR of them. Note that because of their periodicity, it suffices to study the SCR and PCR for a basic unit cell of these patterns, and increasing the pattern resolution will not lead to a significant change in either the SCR or PCR. It can be observed that the p6m$ \rightarrow $ p6m pattern has the highest SCR and PCR, while the p3m1 $\rightarrow $ pmm pattern has the lowest SCR and PCR. Moreover, it is noteworthy that while many of the patterns considered have the same connectivity, their SCR and PCR can be highly different. This shows that the designs of both the tile geometry and the tile connectivity are essential for controlling the size change effect of the deployable monotile kirigami structures.

After studying the periodic monotile kirigami patterns, we move on to the analysis of the aperiodic monotile kirigami patterns (see Table~\ref{tab:resolutions}). In particular, here we consider both the quasicrystal monotile kirigami patterns (Penrose, Ammann--Beenker, Stampfli) presented in Fig.~\ref{fig:quasicrystal} and two representative examples of polykite monotile kirigami patterns formed by ``Hat'' and ``Turtle''. In contrast to the previous analysis of the wallpaper group monotile kirigami patterns, no periodicity is present in the aperiodic monotile kirigami patterns here. Hence, it is also natural to consider how an increase in the pattern resolution will lead to a change in the size change. As shown in the results, increasing the pattern resolution of the quasicrystal monotile kirigami patterns may lead to a significant change in the SCR and PCR, while the SCR and PCR of the polykite monotile kirigami patterns are relatively stable even if we increase the resolution.

\begin{table}[t]
    \centering
    \begin{tabular}{c|c|c|c}
        \textbf{Pattern} & \textbf{Number of tiles} & \textbf{Size change ratio} & \textbf{Perimeter change ratio} \\ \hline 
        Penrose &20& 1.2521&1.1190\\
        Penrose &35& 1.8304&1.3529\\
        Penrose &70& 1.3445&1.1550\\
        Ammann--Beenker &40&1.1924 &1.0920\\
        Ammann--Beenker &64& 2.9941&1.7303\\
        Stampfli & 48 & 1.5701 & 1.2530 \\
        Stampfli & 84 & 1.3943 & 1.2331 \\
        Hat & 6 & 1.2283 & 1.1526\\
        Hat & 36 & 1.1969 &1.1365\\
        Turtle & 6 & 1.1560 & 1.1251\\
        Turtle & 36 & 1.1954 & 1.1161\\
    \end{tabular}
    \caption{Analysis of aperiodic monotile kirigami patterns, including Penrose, Ammann--Beenker, Stampfli, Hat, and Turtle, with different pattern resolutions.}
    \label{tab:resolutions}
\end{table}

\subsection{More analyses on the hat patterns and variants}

\begin{table}[t]
    \centering
    \begin{tabular}{c|c|c}
     \textbf{Tile($a,b$)} & \textbf{Size change ratio} & \textbf{Perimeter change ratio}\\ \hline
     Tile(0,2.73)&1.4098  &1.2313  \\
     Tile(0.37,2.36)&1.3257&1.2042\\
     Tile(0.65,2.08)&1.2712&1.1771\\
     Tile(1.00,1.73) (Hat)&1.2283&1.1526\\
     Tile(1.23,1.51)&1.1785&1.1265\\
     Tile(1.51,1.23)&1.1726&1.1271\\
     Tile(1.73,1.00) (Turtle) &1.1560&1.1251\\
     Tile(2.08,0.65)&1.1405&1.1203\\
     Tile(2.36,0.37)&1.1359&1.1175\\
     Tile(2.73,0)&1.1342 &1.1135  \\
    \end{tabular}
    \caption{Analysis of the size change ratio and perimeter change ratio of aperiodic polykite monotile kirigami patterns constructed using polykite monotiles in the form of Tile$(a,b)$ with different tile geometry parameters $(a,b)$.}
    \label{tab:differentab}
\end{table}

Recall that both the ``Hat'' and ``Turtle'' tiles belong to an infinite family of polykite monotiles in the form of Tile$(a,b)$, where $a$ and $b$ are the relative lengths of two consecutive sides of the tile~\cite{smith2024aperiodic}. As shown in Table~\ref{tab:resolutions}, the ``Hat'' and ``Turtle'' monotile kirigami structures yield different SCR and PCR. It is natural to ask how the geometric properties of the polykite monotile kirigami structures change with the parameters $a, b$. Therefore, here we further consider varying the ratio between $a, b$ as in Fig.~\ref{fig:polykite}(e), covering $(a,b) = (0.37,2.36)$, $(0.65,2.08)$, $(1.00,1.73)$ (which corresponds to ``Hat''), $(1.23,1.51)$, $(1.51,1.23)$, $(1.73,1.00)$ (which corresponds to ``Turtle''), $(2.08,0.65)$, and $(2.36,0.37)$. The two extreme cases (0, 2.73) and (2.73, 0), which are the bounds for the family with one of the edges being degenerated, are also covered to study the limits of the observed behaviors.

In Table~\ref{tab:differentab}, we report the SCR and PCR of the resulting polykite monotile kirigami structures. It can be observed that in general, as the ratio $a/b$ increases, both the SCR and the PCR decrease. To further quantify this relationship, in Fig.~\ref{fig:scr_pcr_graph} we plot the logged ratios $\log\text{(SCR)}$ and $\log\text{(PCR)}$ against $\log(b/a)$ (with the two cases with degenerated edges excluded as $\log(b/a)$ is not well-defined), from which we can see that both $\log\text{(SCR)}$ and $\log\text{(PCR)}$ increase as $\log(b/a)$ increases. By calculating the slope of the best-fit least-squares straight lines, we have
\begin{equation}
    \text{SCR}(a,b)\sim (b/a)^{0.0436}
\end{equation}
and
\begin{equation}
    \text{PCR}(a,b) \sim (b/a)^{0.0204}.
\end{equation}
This suggests a simple way to achieve any desired SCR and PCR for an aperiodic polykite monotile kirigami pattern by suitably controlling the ratio between the polykite tile side lengths $a$ and $b$.

\begin{figure}[t]
    \centering
    \includegraphics[width=0.8\linewidth]{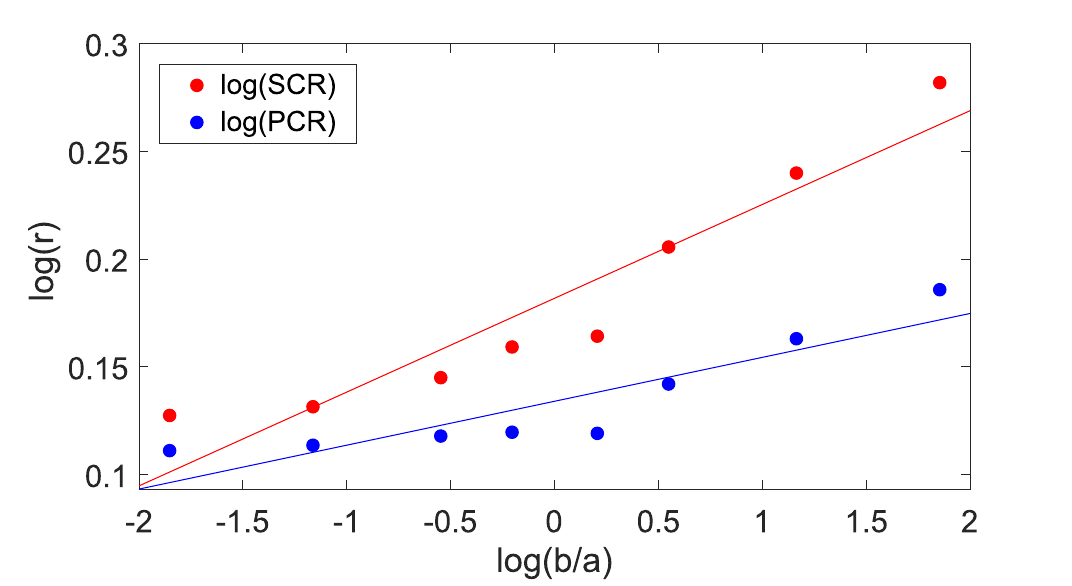}
    \caption{\textbf{Analyzing the relationship between the tile geometry and the size change of aperiodic polykite monotile kirigami patterns.} Here, we plot $\log(r)$ (where $r$ is the size change ratio (SCR) or perimeter change ratio (PCR)) against $\log(b/a)$ for different choices of tile geometry parameters $(a,b)$. The best-fit least-squares straight lines are also displayed.}
    \label{fig:scr_pcr_graph}
\end{figure}

%%%%%%%%%%%%%%%%%%%%%%%%%%%%%%%%%%%%%%%%%

\section{Discussion and Future Work} \label{sect:conclusion}

In this work, we have explored the design of periodic and aperiodic monotile kirigami patterns. In particular, we have proved the existence of deployable kirigami structures based on monotile patterns, and have further performed a systematic analysis of their geometrical properties using both theory and computation. 

As shown in our analysis, monotile kirigami patterns exhibit highly surprising properties. Specifically, even with the use of only one type of tile, it is possible to achieve periodic and aperiodic deployable structures with a wide range of symmetry properties. The possibility of achieving gain, loss, and preservation of rotational, reflectional, and glide reflectional symmetries makes monotile kirigami patterns highly desirable for different practical applications that require specific shape-shifting effects. The analysis on the size change of different periodic and aperiodic monotile kirigami patterns also provides a simple guideline for users to select suitable tile geometries and their connections, or further modify them, to achieve different target outcomes. More broadly, the monotile kirigami patterns can naturally facilitate the use of kirigami-inspired deployable structures in various science and engineering applications, as only one single shape is required for the tile cutting or moulding process during mass manufacturing.

Moreover, our study of monotile kirigami can serve as a solid foundation for further exploration of the design of more general shape-morphing structures. For instance, while we have only considered the design of planar kirigami structures using monotile patterns in this work, it has recently been shown that the 2D symmetry groups can also be used for the design of kinematic-based origami metastructures~\cite{liu2025reconfigurable}. Therefore, it is natural to explore the possibility of extending our study to origami design. Another possible direction is to consider the three-dimensional analog of the monotile patterns for the design of three-dimensional metamaterials~\cite{choi2020control,maurizi2025designing} and analyze their geometric and mechanical properties for a wider range of applications.

\bibliographystyle{ieeetr}
\bibliography{reference}

\begin{thebibliography}{10}

\bibitem{zhai2021mechanical}
Z.~Zhai, L.~Wu, and H.~Jiang, ``Mechanical metamaterials based on origami and kirigami,'' {\em Applied Physics Reviews}, vol.~8, no.~4, 2021.

\bibitem{jiao2023mechanical}
P.~Jiao, J.~Mueller, J.~R. Raney, X.~Zheng, and A.~H. Alavi, ``Mechanical metamaterials and beyond,'' {\em Nature Communications}, vol.~14, no.~1, p.~6004, 2023.

\bibitem{jin2024engineering}
L.~Jin and S.~Yang, ``Engineering kirigami frameworks toward real-world applications,'' {\em Advanced Materials}, vol.~36, no.~9, p.~2308560, 2024.

\bibitem{choi2024computational}
G.~P.~T. Choi, ``Computational design of art-inspired metamaterials,'' {\em Nature Computational Science}, vol.~4, no.~8, pp.~549--552, 2024.

\bibitem{dudek2025shape}
K.~K. Dudek, M.~Kadic, C.~Coulais, and K.~Bertoldi, ``Shape-morphing metamaterials,'' {\em Nature Reviews Materials}, vol.~10, pp.~783--798, 2025.

\bibitem{blees2015graphene}
M.~K. Blees, A.~W. Barnard, P.~A. Rose, S.~P. Roberts, K.~L. McGill, P.~Y. Huang, A.~R. Ruyack, J.~W. Kevek, B.~Kobrin, D.~A. Muller, {\em et~al.}, ``Graphene kirigami,'' {\em Nature}, vol.~524, no.~7564, p.~204, 2015.

\bibitem{rafsanjani2018kirigami}
A.~Rafsanjani, Y.~Zhang, B.~Liu, S.~M. Rubinstein, and K.~Bertoldi, ``Kirigami skins make a simple soft actuator crawl,'' {\em Science Robotics}, vol.~3, no.~15, p.~eaar7555, 2018.

\bibitem{wang2023physics}
L.~Wang, Y.~Chang, S.~Wu, R.~R. Zhao, and W.~Chen, ``Physics-aware differentiable design of magnetically actuated kirigami for shape morphing,'' {\em Nature Communications}, vol.~14, no.~1, p.~8516, 2023.

\bibitem{yang2023new}
Y.~Yang, A.~Vallecchi, E.~Shamonina, C.~J. Stevens, and Z.~You, ``A new class of transformable kirigami metamaterials for reconfigurable electromagnetic systems,'' {\em Scientific Reports}, vol.~13, no.~1, p.~1219, 2023.

\bibitem{grunbaum1986tilings}
B.~Gr{\"u}nbaum and G.~C. Shephard, {\em Tilings and patterns}.
\newblock WH Freeman \& Co., 1986.

\bibitem{grima2006auxetic}
J.~N. Grima and K.~E. Evans, ``Auxetic behavior from rotating triangles,'' {\em Journal of Materials Science}, vol.~41, no.~10, pp.~3193--3196, 2006.

\bibitem{grima2000auxetic}
J.~N. Grima and K.~E. Evans, ``Auxetic behavior from rotating squares,'' {\em Journal of Materials Science Letters}, vol.~19, no.~17, pp.~1563--1565, 2000.

\bibitem{grima2004negative}
J.~N. Grima, A.~Alderson, and K.~E. Evans, ``Negative poisson’s ratios from rotating rectangles,'' {\em Computational Methods in Science and Technology}, vol.~10, no.~2, pp.~137--145, 2004.

\bibitem{attard2008auxetic}
D.~Attard and J.~N. Grima, ``Auxetic behaviour from rotating rhombi,'' {\em Physica Status Solidi B}, vol.~245, no.~11, pp.~2395--2404, 2008.

\bibitem{rafsanjani2016bistable}
A.~Rafsanjani and D.~Pasini, ``Bistable auxetic mechanical metamaterials inspired by ancient geometric motifs,'' {\em Extreme Mechanics Letters}, vol.~9, pp.~291--296, 2016.

\bibitem{grima2011auxetic}
J.~N. Grima, E.~Manicaro, and D.~Attard, ``Auxetic behaviour from connected different-sized squares and rectangles,'' {\em Proceedings of the Royal Society A}, vol.~467, no.~2126, pp.~439--458, 2011.

\bibitem{stavric2019geometrical}
M.~Stavric and A.~Wiltsche, ``Geometrical elaboration of auxetic structures,'' {\em Nexus Network Journal}, vol.~21, no.~1, pp.~79--90, 2019.

\bibitem{liu2021wallpaper}
L.~Liu, G.~P.~T. Choi, and L.~Mahadevan, ``Wallpaper group kirigami,'' {\em Proceedings of the Royal Society A}, vol.~477, no.~2252, p.~20210161, 2021.

\bibitem{liu2024auxetic}
X.~Liu, L.~Lu, L.~Cao, O.~Deussen, and C.~Tu, ``Auxetic dihedral {E}scher tessellations,'' {\em Graphical Models}, vol.~133, p.~101215, 2024.

\bibitem{liu2022quasicrystal}
L.~Liu, G.~P.~T. Choi, and L.~Mahadevan, ``Quasicrystal kirigami,'' {\em Physical Review Research}, vol.~4, no.~3, p.~033114, 2022.

\bibitem{choi2019programming}
G.~P.~T. Choi, L.~H. Dudte, and L.~Mahadevan, ``Programming shape using kirigami tessellations,'' {\em Nature Materials}, vol.~18, no.~9, pp.~999--1004, 2019.

\bibitem{choi2021compact}
G.~P.~T. Choi, L.~H. Dudte, and L.~Mahadevan, ``Compact reconfigurable kirigami,'' {\em Physical Review Research}, vol.~3, no.~4, p.~043030, 2021.

\bibitem{dudte2023additive}
L.~H. Dudte, G.~P.~T. Choi, K.~P. Becker, and L.~Mahadevan, ``An additive framework for kirigami design,'' {\em Nature Computational Science}, vol.~3, no.~5, pp.~443--454, 2023.

\bibitem{lubbers2019excess}
L.~A. Lubbers and M.~van Hecke, ``Excess floppy modes and multibranched mechanisms in metamaterials with symmetries,'' {\em Physical Review E}, vol.~100, no.~2, p.~021001, 2019.

\bibitem{an2020programmable}
N.~An, A.~G. Domel, J.~Zhou, A.~Rafsanjani, and K.~Bertoldi, ``Programmable hierarchical kirigami,'' {\em Advanced Functional Materials}, vol.~30, no.~6, p.~1906711, 2020.

\bibitem{chen2020deterministic}
S.~Chen, G.~P.~T. Choi, and L.~Mahadevan, ``Deterministic and stochastic control of kirigami topology,'' {\em Proceedings of the National Academy of Sciences}, vol.~117, no.~9, pp.~4511--4517, 2020.

\bibitem{choi2023explosive}
G.~P.~T. Choi, L.~Liu, and L.~Mahadevan, ``Explosive rigidity percolation in kirigami,'' {\em Proceedings of the Royal Society A}, vol.~479, no.~2271, p.~20220798, 2023.

\bibitem{smith2024aperiodic}
D.~Smith, J.~S. Myers, C.~S. Kaplan, and C.~Goodman-Strauss, ``An aperiodic monotile,'' {\em Combinatorial Theory}, vol.~4, no.~1, p.~6, 2024.

\bibitem{smith2024chiral}
D.~Smith, J.~S. Myers, C.~S. Kaplan, and C.~Goodman-Strauss, ``A chiral aperiodic monotile,'' {\em Combinatorial Theory}, vol.~4, no.~2, p.~13, 2024.

\bibitem{clarke2023isotropic}
D.~J. Clarke, F.~Carter, I.~Jowers, and R.~J. Moat, ``An isotropic zero poisson's ratio metamaterial based on the aperiodic ‘hat’monotile,'' {\em Applied Materials Today}, vol.~35, p.~101959, 2023.

\bibitem{jung2024aperiodicity}
J.~Jung, A.~Chen, and G.~X. Gu, ``Aperiodicity is all you need: Aperiodic monotiles for high-performance composites,'' {\em Materials Today}, vol.~73, pp.~1--8, 2024.

\bibitem{naji2024effective}
M.~M. Naji and R.~K.~A. Al-Rub, ``Effective elastic properties of novel aperiodic monotile-based lattice metamaterials,'' {\em Materials \& Design}, vol.~244, p.~113102, 2024.

\bibitem{schirmann2024physical}
J.~Schirmann, S.~Franca, F.~Flicker, and A.~G. Grushin, ``Physical properties of an aperiodic monotile with graphene-like features, chirality, and zero modes,'' {\em Physical Review Letters}, vol.~132, no.~8, p.~086402, 2024.

\bibitem{liu2025reconfigurable}
W.~Liu, Q.~Ren, Y.~Wang, Z.~Zhang, X.~Wang, Y.~Liang, H.~Huang, J.~Ma, H.~Jiang, and Y.~Chen, ``Reconfigurable modular origami for tunable {2D} symmetry groups,'' {\em Science Advances}, vol.~11, no.~46, p.~eady3812, 2025.

\bibitem{choi2020control}
G.~P.~T. Choi, S.~Chen, and L.~Mahadevan, ``Control of connectivity and rigidity in prismatic assemblies,'' {\em Proceedings of the Royal Society A}, vol.~476, no.~2244, p.~20200485, 2020.

\bibitem{maurizi2025designing}
M.~Maurizi, D.~Xu, Y.-T. Wang, D.~Yao, D.~Hahn, M.~Oudich, A.~Satpati, M.~Bauchy, W.~Wang, Y.~Sun, {\em et~al.}, ``Designing metamaterials with programmable nonlinear responses and geometric constraints in graph space,'' {\em Nature Machine Intelligence}, vol.~7, no.~7, pp.~1023--1036, 2025.

\end{thebibliography}

\end{document}